\documentclass[journal]{IEEEtran}
\usepackage{amsmath}
\usepackage{graphicx}
\usepackage[caption=false]{subfig}
\usepackage{float}
\usepackage{hyperref}
\newtheorem{theorem}{\bf Theorem} 
\ifCLASSINFOpdf
 
\else
 
\fi

\begin{document}

\title{Proactive Incentive Regulation in Multi-Agent Systems with Environmental Feedback}

\author{Xinyang Cao, Shijia Hua, and~Linjie Liu

\thanks{This work was funded by the National Natural Science Foundation of China (Nos. 62406255 and 62306243), China Postdoctoral Science Foundation (Certificate Number: 2024M762633), the Humanities and Social Sciences Research Planning Fund of the Ministry of Education (No. 24XJC630006), and Shaanxi Province Postdoctoral Research Fund (No. 2025BSHSDZZ134). (Corresponding author: Linjie Liu)}

\thanks{Xinyang Cao, Shijia Hua, and Linjie Liu are with the College of Science, Northwest A \& F University, Yangling, 712100.}} 

\markboth{Journal of \LaTeX\ Class Files,~Vol.~14, No.~8, August~2015}%
{Shell \MakeLowercase{\textit{et al.}}: Bare Demo of IEEEtran.cls for IEEE Journals}

\maketitle

\begin{abstract}
In environmental feedback systems, self-interested behaviors of rational agents often undermine cooperation and environmental sustainability. Although punitive incentive mechanisms are widely recognized as effective in addressing such social dilemmas, the timing of their implementation under different environmental conditions remains insufficiently understood. In this paper, we develop a multi-agent environmental feedback game framework with coupled incentive intensities across resource states. We investigate how system dynamics vary with incentive intensity under both resource-abundant and resource-scarce conditions. Theoretical analysis shows that imposing penalties in resource-abundant states can transform the system from a tragedy of the commons into a bistable regime that supports full cooperation. In contrast, imposing penalties in resource-scarce states leads to complex dynamical behaviors, including interior stability, heteroclinic cycles, and Hopf bifurcations, which hinder the emergence of stable cooperation. These results highlight the importance of timing in incentive regulation. Proactive intervention under favorable environmental conditions is more effective in sustaining cooperation than delayed enforcement after environmental degradation. 
\end{abstract}

\begin{IEEEkeywords}
Multi-agent systems; Environmental feedback; Incentive regulation; Evolutionary dynamics; Feedback control; Bifurcation analysis
\end{IEEEkeywords}

\IEEEpeerreviewmaketitle

\section{Introduction}
In multi-agent systems, promoting the spontaneous emergence of cooperative behavior within a population remains a highly challenging issue \cite{perc2008social,xu2025reinforcement,perc2010coevolutionary,jia2025asymmetric}. The fundamental reason is that cooperation typically requires individuals to bear additional costs or sacrifice immediate local interests. Confronted with these costs, if every agent operates on absolute individual rationality, resorting to free-riding or over-exploiting resources to maximize personal gain, these micro-level selfish behaviors will inevitably precipitate a macro-level tragedy of the commons \cite{GDBJ1,jia2025social}, rendering the emergence of collective cooperation impossible. Meanwhile, the system environment is not a static backdrop. It provides continuous feedback in response to collective behaviors. When a substantial number of agents refuse to bear the costs of cooperation, the environmental state degrades accordingly \cite{han2024co}. This negative environmental feedback ultimately backfires on every individual within the system, resulting in the complete depletion of common-pool resources such as forests and fisheries \cite{gong2022limit,liu2023coevolutionary,GDBJBY}. This not only destroys the system's sustainability but also ultimately eradicates the agents' own long-term payoffs.

Evolutionary game theory offers a powerful framework for analyzing cooperation by modeling how strategy frequencies evolve based on interaction payoffs \cite{feng2023evolutionary,pi2025dynamic,zhu2025finite,han2022institutional,zhu2022nash,song2025emergence,wang2022modelling}. This evolution is typically represented by a payoff matrix and governed by replicator dynamics \cite{ramazi2020global}. The Prisoner’s Dilemma classically illustrates this challenge. Because defection consistently yields higher payoffs, rational individuals default to defection. Ultimately, defection dominates the population, directly mirroring the ``tragedy of the commons" \cite{axelrod1980effective,hardin1998extensions}. To sustain cooperation, researchers often introduce external interventions, such as incentive mechanisms, to alter the payoff matrix and discourage defection \cite{sun2023state,zhu2026evolutionary,hua2024coevolutionary,GEZHONGCUJINFANGSHI,JILI1,zhu2023equilibrium}. However, these conventional models assume a static payoff structure. In reality, individual behaviors actively alter the environment, which reciprocally reshapes the payoffs. Recognizing that payoff structures are not fixed, but dynamically co-evolve with the environmental state, has thus motivated the development of feedback-evolving games \cite{lu2024hybrid}.

The framework of feedback-evolving games captures a fundamental bidirectional loop. Individuals' behavioral decisions actively alter game environment, and these environmental shifts, in turn, drive changes in individual strategic choices. Early research into these coupled systems revealed that this dynamic interaction often results in oscillating tragedy of the commons \cite{Weiz}. Subsequent expansions of this framework have demonstrated that broader feedback conditions yield complex dynamic behaviors, including multiple steady states and chaos \cite{Tilman2018EvolutionaryGW}, which can be systematically classified using dilemma phase space analysis \cite{10.1093/pnasnexus/pgae455}. Within this evolving paradigm, research has increasingly explored the intersection of incentive mechanisms and dynamic environments \cite{JILI3,JILI4,CHENGFAGUODU,CHENGFA2}. Recent studies have investigated co-evolutionary dynamics across a spectrum of social dilemmas. Within these eco-evolutionary frameworks, the impact of external interventions has become a focal point \cite{hua2023facilitating,hua2024coevolutionary}. For instance, comparative analyses of reward and punishment mechanisms have demonstrated their distinct capacities to avert the tragedy of the commons.
Crucially, however, existing studies overlook a fundamental dimension: the environmental context of the interventions. It remains unclear whether applying incentives during periods of resource abundance or resource scarcity is more effective in fostering the emergence of cooperation.

In this paper, we develop a feedback-evolving game model of punitive incentives under both resource-rich and resource-scarce conditions in the Prisoner’s Dilemma. We consider that whenever an individual chooses to defect, the defector incurs a fine, applied separately in both resource-abundant and resource-scarce environments.
Through detailed theoretical analysis, we find that when incentives are applied under resource-abundant conditions, the system has the potential to converge to an ideal state of full cooperation with abundant resources. However, when incentives are applied under resource-scarce conditions, the system is difficult to achieve full cooperation. Although there is a possibility of escaping the tragedy of the commons, this is often accompanied by an extremely poor environmental state. Furthermore, applying incentives under resource scarcity can lead to complex dynamical phenomena such as heteroclinic cycle and Hopf bifurcations. These results indicate that punitive incentives are not effective in all scenarios. Their success is closely tied to the timing of their application. Applying penalties when resources are abundant helps ``prevent problems before they arise” and guides the system toward an ideal state. Conversely, applying penalties when resources are scarce often amounts to merely ``closing the barn door after the horse has bolted”. Our findings provide clear theoretical guidance for the design of penalties in the governance of public resources: penalties should be applied proactively when environmental conditions are still favorable, rather than waiting until resources are depleted to take remedial action. In summary, the main contributions of this work are as follows:
\begin{itemize}
\item We propose a state-dependent co-evolutionary framework that couples replicator dynamics with environmental transitions under punitive constraints.
\item We identify a diverse spectrum of dynamical regimes beyond fixed-point equilibria, specifically characterizing the emergence of expanding heteroclinic cycles and central periodic orbits within a Hamiltonian framework.
\item Our analysis demonstrates that proactive punishment in resource-abundant scenarios is superior to reactive intervention in resource-scarce ones.
\end{itemize}
The remainder of this paper is organized as follows. In Section II, we present our theoretical model and the dynamically feedback-evolving game system. Section III reports the main results, where we conduct a detailed theoretical analysis of the dynamical feedback system and provide concrete numerical examples to verify our analytical findings. Section IV offers a thorough discussion, in which we summarize our work and explore promising directions for future research.

\section{Model and Methods}
\begin{table*}[!t]
\centering
\caption{Stability of the equilibrium point when penalties are imposed under resource abundance}
\label{tab:combined_stability}
\begin{tabular}{c c c}
\hline
Equilibrium points & Existence condition & Stability condition \\
\hline
$(0,0)$ & Always exist & Locally stable \\
$(0,1)$ & Always exist & Unstable saddle when $0 < r < b$;
Unstable when $r > b$\\
$(1,0)$ & Always exist & Unstable saddle \\
$(1,1)$ & Always exist & Locally stable when $r > a$ \\
$(x_{n=1},1)$ & \begin{tabular}{c}
Exists when $a>b$ and $b<r<a$, \\
or $a<b$ and $a<r<b$
\end{tabular} & 
Locally stable when $r_{c1} < r < a$ (if $a>b$); \\
$(x^{*},n^{*})$ & Exists when $r > r_{c1}$ & Unstable saddle when exists \\
\hline
\end{tabular}
\label{table1}
\end{table*}
We consider an infinitely large, well-mixed population where individuals interact in pairwise games. Each individual adopts one of two strategies: cooperation ($C$) or defection ($D$). Environmental resource availability is captured by $n \in [0,1]$, which defines the probability of the environment being in an abundant state. Here, $n=1$ and $n=0$ correspond to absolute abundance and total depletion, respectively. Thus, the game structure under conditions of resource abundance and resource scarcity can be described by the following $2\times2$ payoff matrix:
\begin{equation}
    {A}_{0}= \begin{pmatrix}
{R}_{0}&{S}_{0}\\ 
{T}_{0}&{P}_{0}
\end{pmatrix}
\label{1}
\end{equation}
\begin{equation}
    {A}_{1}= \begin{pmatrix}
{R}_{1} & {S}_{1}\\ 
{T}_{1} &{ P}_{1}
\end{pmatrix}.
\label{2}
\end{equation}

Here, ${R}_{i}$ (reward) represents the payoff when both parties choose to cooperate; ${S}_{i}$  (sucker’s payoff) is the payoff for cooperating while the opponent defects; ${T}_{i}$ (temptation) is the payoff for defecting while the opponent cooperates; and ${P}_{i}$ (punishment) is the payoff for mutual defection \cite{10.1093/pnasnexus/pgae455}. These values must satisfy the structural characteristics of the Prisoner’s Dilemma game: ${T}_{i}> {R}_{i}> {P}_{i}> {S}_{i}$. As the environmental state dynamically changes, the payoff matrix for the actual game can be derived from the environmental resource abundance $n$ through linear weighting according to previous work \cite{Weiz,10.1093/pnasnexus/pgae455}:
\begin{equation}
    A_n = (1 - n)A_0 + nA_1.
    \label{3}
\end{equation}

To examine how the timing of incentive implementation affects cooperative behavior, we investigate two distinct incentive regimes: one in which penalties are imposed when environmental resources are abundant, and another in which penalties are activated only when resources become scarce. Let the penalty intensity be $r > 0$. The payoff matrix for imposing penalties when resources are abundant ($n=1$) is as follows:\cite{JILIYINRU}
\begin{equation}
    A(n) = (1 - n)\begin{bmatrix}
R_0 & S_0 \\
T_0 & P_0
\end{bmatrix} + n\begin{bmatrix}
R_1 & S_1 \\
T_1 - r & P_1 - r
\end{bmatrix}.
\label{4}
\end{equation}

The payoff matrix for imposing penalties during times of resource scarcity ($n=0$) is as follows:
\begin{equation}
    A(n) = (1 - n)\begin{bmatrix}
R_0 & S_0 \\
T_0 - r & P_0 - r
\end{bmatrix} + n\begin{bmatrix}
R_1 & S_1 \\
T_1 & P_1
\end{bmatrix}.
\label{5}
\end{equation}

To simplify the analysis, we define the following four parameters to characterize the relative attractiveness of defection or cooperation: $a= {T}_{1}- {R}_{1}$ represents the additional payoff of defection relative to cooperation when the opponent chooses to cooperate under resource abundance; $b= {P}_{1}- {S}_{1}$ represents the additional payoff of defection relative to cooperation when the opponent chooses to defect under resource abundance; $c = T_0 - R_0$ denotes the additional payoff of defection relative to cooperation when the opponent chooses to cooperate under resource scarcity; $d = P_0 - S_0$ denotes the additional payoff of defection relative to cooperation when the opponent chooses to defect under resource scarcity.

To characterize the dynamic changes in the level of cooperation, we derive the system’s replicator equations as follows \cite{FUZHIFANGCHENG}
\begin{equation}
    \dot{x} = x(1 - x)f(x,n),
    \label{6}
\end{equation}
where $x$ represents the proportion of cooperators in the population, and $f(x,n)$ denotes the difference between the fitness of a cooperating individual and that of a defector. When $f(x,n) > 0$, the frequency of cooperation increases. Conversely, when $f(x,n) < 0$, the frequency of defection increases. 

The evolution of the environmental resource abundance $n$ is influenced by group behavior. We assume that cooperation promotes resource recovery, while defection accelerates resource depletion. Thus, the following environmental dynamics equation holds \cite{Weiz}
\begin{equation}
    \dot{n} = \varepsilon n(1 - n)[(1 + \theta)x - 1],
    \label{7}
\end{equation}
where $\varepsilon  > 0$ is the environmental change rate coefficient, which controls the timescale of environmental evolution relative to strategy evolution. The larger $\varepsilon $ is, the faster the environment responds to behavioral changes; the smaller $\varepsilon $ is, the slower the environmental changes. $\theta $ is the ratio of the gain rate to the loss rate for cooperators and defectors. A larger $\theta$ implies that cooperators make a more significant positive contribution to resources relative to defectors, meaning fewer cooperators are needed to restore resources, whereas a smaller $\theta$ requires more cooperators.

In the payoff matrix, the signs of $T-R$ and $P-S$ determine the game structure of the system. Therefore, to characterize the dynamic changes in the game structure, we often introduce two dimensionless indices \cite{10.1093/pnasnexus/pgae455}
\begin{equation}
    D_g' = \frac{T - R}{R - P},
    \label{8}
\end{equation}
\begin{equation}
    D_r' = \frac{P - S}{R - P}.
    \label{9}
\end{equation}

For convenience, we also denote $D_g'$ as GID, which represents a player’s exploitation motivation to take advantage of the opponent \cite{ito2018scaling,tanimoto2015fundamentals,wang2015universal}. Similarly, we denote $D_r'$ as RAD, which reflects a player’s defensive motivation to avoid being exploited by others. Based on the signs of these two measures, the game structure can be classified into the following four types: When $D_g' > 0$ and $D_r' > 0$, the game type is a PD game, in which defection dominates; when $D_g' > 0$ and $D_r' < 0$, the game type is a Chicken game, where cooperation and defection coexist stably; when $D_g' < 0$ and $D_r' > 0$, the game is a SH game, and the system exhibits a bistable state; when $D_g' < 0$ and $D_r' < 0$, the game is a Trivial game, and cooperation dominates. The introduction and classification of this dimensionless index provide a powerful tool for subsequent comprehensive analysis of the system’s dynamical behavior \cite{10.1093/pnasnexus/pgae455}.
\section{Theoretical analysis and numerical examples}
\subsection{Evolutionary Dynamics Analysis of Punishment Under Resource Abundance}
In this section, we first analyze the coevolutionary dynamics of imposing punishment when resources are abundant. To begin with, we present the conditions for the existence of equilibrium points in the system.

\begin{theorem}\label{thm:main}
Apart from the corner equilibrium points $(0,0)$, $(0,1)$, $(1,0)$, and $(1,1)$, there exists an internal equilibrium point $(x^*,n^*)$ when $r > r_{c1} = \frac{\theta b + a}{1 + \theta}$. The existence conditions for the boundary equilibrium points of the system depend on the magnitudes of $a$ and $b$: 
\begin{itemize}
\item when $a > b$, the existence condition is $b < r < a$; 
\item when $a < b$, the existence condition is $a < r < b$; 
\item $(x_{n = 0}, 0)$ does not exist in this system. 
\end{itemize}
Where 
\begin{equation}
    (x^{\ast},n^{\ast}) = \left( \frac{1}{1 + \theta},\frac{\theta d + c}{\theta d + c - (\theta b + a) + (1 + \theta)r} \right)
    \label{10}
\end{equation}
\begin{equation}
(x_{n = 1},1) = \left( \frac{r - b}{a - b},1 \right)
\label{11}
\end{equation}
\begin{equation}
(x_{n = 0},0) = \left( \frac{d}{d - c},0 \right).    
\label{12}
\end{equation}
\end{theorem}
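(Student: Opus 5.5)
First I will write $f(x,n)$ explicitly for the abundance-penalty matrix (\ref{4}). The cooperator's payoff is $x R_n + (1-x) S_n$ and the defector's is $x T_n + (1-x) P_n$, where under (\ref{4}) the defector entries carry the extra term $-nr$. Using the definitions of $a,b,c,d$, this gives
\begin{equation*}
f(x,n) = -\big[(1-n)\big(xc+(1-x)d\big) + n\big(xa+(1-x)b\big) - nr\big].
\end{equation*}
The defector entries become $T_1-r$ and $P_1-r$, so the shift $r$ enters both $x$-coefficients equally. Equivalently, $f(x,n) = (1-n)\,g_0(x) + n\,g_1(x)$ with $g_0(x) = -(d+(c-d)x)$ and $g_1(x) = r-b-(a-b)x$.

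Equilibria of (\ref{6})--(\ref{7}) satisfy $x(1-x)f=0$ and $n(1-n)[(1+\theta)x-1]=0$. I will enumerate them by cases.

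The four corners satisfy both equations trivially, so they always exist.

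On the edge $n=1$ with $x\in(0,1)$, I need $g_1(x)=0$, which gives $x_{n=1}=(r-b)/(a-b)$ as in (\ref{11}). This lies in $(0,1)$ exactly when $r-b$ and $a-r$ have the same sign as $a-b$. That yields $b<r<a$ if $a>b$, and $a<r<b$ if $a<b$.

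On the edge $n=0$, I need $g_0(x)=0$, which gives $x=d/(d-c)$ as in (\ref{12}). Since the PD ordering forces $c,d>0$, this value is negative when $c<d$ and exceeds $1$ when $c>d$. So it never lies in $(0,1)$, and this boundary point does not exist.

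The edges $x=0$ and $x=1$ with $n\in(0,1)$ require $(1+\theta)x-1=0$. That is impossible for $\theta>0$ (assuming $\theta>0$, which is implicit since $x^*=1/(1+\theta)\in(0,1)$). So these edges contain only the corners.

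For the interior, the $\dot n$ equation forces $x^*=1/(1+\theta)$. Substituting into $f=0$ and multiplying by $(1+\theta)$ gives
\begin{equation*}
(1-n)(c+\theta d) + n(a+\theta b) - n(1+\theta) r = 0.
\end{equation*}
This is linear in $n$, with solution
\begin{equation*}
n^* = \frac{\theta d + c}{\theta d + c - (\theta b+a) + (1+\theta)r},
\end{equation*}
matching (\ref{10}).

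The remaining step, which I expect to be the main obstacle, is to show that $n^*\in(0,1)$ if and only if $r>r_{c1}$. The numerator $\theta d+c$ is positive. Therefore $0<n^*<1$ is equivalent to the denominator exceeding the numerator, which reduces to $(1+\theta)r > \theta b + a$, i.e. $r > r_{c1}$. When that holds, the denominator is also positive, so no further sign cases arise.

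The only genuine subtlety is the sign bookkeeping: confirming $c,d>0$ and $\theta>0$ from the standing assumptions.
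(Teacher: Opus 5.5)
Your proof is correct and follows essentially the same route as the paper's. You locate the $n=1$ root $(r-b)/(a-b)$ and require it in $(0,1)$, dismiss $d/(d-c)$ because $c,d>0$ (the paper's ``$d>d-c$''), and reduce $0<n^*<1$ to the denominator exceeding the positive numerator $\theta d+c$; your explicit formula for $f$ and the check that the edges $x=0,1$ carry no extra equilibria add completeness the paper omits. One small slip: you swapped the two cases for $d/(d-c)$, which in fact exceeds $1$ when $c<d$ and is negative when $c>d$, but the conclusion that it never lies in $(0,1)$ is unaffected.
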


\begin{IEEEproof} First, for the equilibrium points other than the corner equilibrium points, we can set $\dot{x} = \dot{n} = 0$, $n = 1$, or $n = 0$ to determine the system’s internal equilibrium points $(x^*,n^*)$ and the boundary equilibrium points $(x_{n=1}, 1)$ and $(x_{n = 0}, 0)$, respectively.

Regarding the existence conditions for the internal equilibrium points $(x^*,n^*)$. It is clear that $0 < x^* < 1$. Assume that $0<\frac{\theta d + c}{\theta d + c - (\theta b + a) + (1 + \theta)r }<1$, so it must be that $\theta d + c - (\theta b + a) + (1 + \theta)r > \theta d + c > 0$. Thus, the existence condition can be expressed as
\begin{equation}
    r > r_{c1} = \frac{\theta b + a}{1 + \theta}.
    \label{13}
\end{equation}

For the boundary equilibrium points $(x_{n = 0}, 0)$, since $d > d - c$, the equilibrium points $(x_{n = 0}, 0)$ do not exist when the punishment is applied under the given conditions.

Regarding the existence conditions for the boundary equilibrium point $(x_{n=1}, 1)$. Since $0 < x_{n=1} < 1$, it must be the case that $a - b < r - b < 0$ or $a - b > r - b > 0$ . Solving the two equations separately yields the existence conditions for $(x_{n=1}, 1)$. When $a > b$, the solution is $b < r < a$; when $a < b$, the solution is $a < r < b$.
\end{IEEEproof}

In the following theorem, we present the stability conditions for the six equilibrium points:

\begin{theorem} The stability of the six equilibrium points of the system satisfies the following conditions:

2.1 The stability of the corner points $(0,0)$ and $(1,0)$ is independent of the penalty strength $r$. $(0,0)$ is a locally stable point, and $(1,0)$ is an unstable saddle point.

2.2 When $0 \leq r < a$, the system has a stable point $(0,0)$.

2.3 When $r > a$, the system has stable points $(0,0)$ and $(1,1)$.

2.4 In particular, assume that $a > b$ holds, the system has stable points $(0,0)$ and $(x_{n=1}, 1)$ when $r_{c1} < r < a$.

\end{theorem}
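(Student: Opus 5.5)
The plan is a direct linearization. I will write out the fitness difference $f(x,n)$ explicitly, compute the Jacobian of the system (\ref{6})--(\ref{7}) at each of the six equilibria identified in Theorem~\ref{thm:main}, and read off the signs of its eigenvalues.

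\textbf{Step 1 (fitness difference).} From (\ref{4}),
\begin{equation*}
f(x,n)=-\Big[(1-n)\big(xc+(1-x)d\big)+n\big(x(a-r)+(1-x)(b-r)\big)\Big].
\end{equation*}
The PD assumption gives $a,b,c,d>0$.

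\textbf{Step 2 (corner points).} At the corners both $x(1-x)$ and $n(1-n)$ vanish, so the Jacobian is diagonal. Its entries are
\begin{equation*}
\lambda_x=(1-2x)f(x,n),\qquad \lambda_n=\varepsilon(1-2n)\big[(1+\theta)x-1\big].
\end{equation*}
\begin{itemize}
\item At $(0,0)$: $\lambda_x=-d<0$ and $\lambda_n=-\varepsilon<0$, so $(0,0)$ is stable for every $r$.
\item At $(1,0)$: $\lambda_x=c>0$ and $\lambda_n=\varepsilon\theta>0$, so $(1,0)$ is unstable for every $r$. This gives 2.1, since neither eigenvalue involves $r$.
\item At $(0,1)$: $\lambda_x=r-b$ and $\lambda_n=\varepsilon>0$, so $(0,1)$ is always unstable.
\item At $(1,1)$: $\lambda_x=a-r$ and $\lambda_n=-\varepsilon\theta<0$, so $(1,1)$ is stable exactly when $r>a$.
\end{itemize}

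\textbf{Step 3 (boundary point $(x_{n=1},1)$).} On $n=1$ we have $\partial\dot n/\partial x=0$, so the Jacobian is triangular. Its eigenvalues are
\begin{equation*}
x(1-x)\,\partial_x f=x(1-x)(b-a),\qquad -\varepsilon\big[(1+\theta)x_{n=1}-1\big].
\end{equation*}
When $a>b$ the first eigenvalue is negative. The second is negative iff $x_{n=1}=(r-b)/(a-b)>1/(1+\theta)$, which is equivalent to $(1+\theta)r>a+\theta b$, i.e.\ $r>r_{c1}$.

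Combining with the existence condition $b<r<a$ from Theorem~\ref{thm:main}, the point is stable for $r_{c1}<r<a$. Note that $b<r_{c1}<a$ holds automatically when $a>b$, since $r_{c1}$ is a convex combination of $a$ and $b$. This yields 2.4.

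\textbf{Step 4 (interior point).} At $(x^*,n^*)$ the bracket $(1+\theta)x-1$ vanishes, so $J_{22}=0$ and $\det J=-J_{12}J_{21}$. Here $J_{21}=\varepsilon n^*(1-n^*)(1+\theta)>0$. For the other entry,
\begin{equation*}
J_{12}=x^*(1-x^*)\,\partial_n f,\qquad \partial_n f=xc+(1-x)d-\big(xa+(1-x)b-r\big).
\end{equation*}
At $x^*=1/(1+\theta)$, $\partial_n f$ equals $\big(c+\theta d-(a+\theta b)+(1+\theta)r\big)/(1+\theta)$. The numerator is exactly the denominator of $n^*$, which is positive whenever $(x^*,n^*)$ exists. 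Hence $\det J<0$ and the interior point is a saddle whenever it exists.

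\textbf{Step 5 (assembling 2.2 and 2.3).} The stable equilibria are now fully determined. For $0\le r<a$, $(1,1)$ is unstable by Step 2. The interior point is a saddle by Step 4. The point $(x_{n=1},1)$ is stable only in the case $a>b$, $r_{c1}<r<a$, which is the refinement stated separately in 2.4. So generically the only stable point is $(0,0)$, which is 2.2. For $r>a$, $(x_{n=1},1)$ no longer exists when $a>b$, and when $a<b$ it exists only for $r<b$ but is then unstable because $b-a>0$. This leaves exactly $(0,0)$ and $(1,1)$, which is 2.3.

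\textbf{Main obstacle.} I expect the main difficulty to be the bookkeeping in 2.2: reconciling it with 2.4, and handling the case $a<b$, where $(x_{n=1},1)$ exists but is unstable. Beyond that, I also need to confirm that no further equilibria lie on the edges $x=0$ or $x=1$. On those edges $\dot n\neq 0$ for $n\in(0,1)$, so none exist there.
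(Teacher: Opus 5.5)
Your argument is correct and follows the same route as the paper. Both linearize at each equilibrium, read off the diagonal Jacobians at the four corners, and show $\det J(x^*,n^*)<0$. That sign holds because $\partial_n f$ at $x^*=1/(1+\theta)$ is a positive multiple of the denominator of $n^*$, which must be positive for $n^*\in(0,1)$.

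The only real difference is at $(x_{n=1},1)$. You use the triangular Jacobian to read off the eigenvalues $x_{n=1}(1-x_{n=1})(b-a)$ and $-\varepsilon[(1+\theta)x_{n=1}-1]$ directly. The paper instead works with trace and determinant, treats the trace as a quadratic in $r$, and needs an auxiliary sign-change point $r_1$. Your version is shorter, and it also shows immediately that this point is never stable when $a<b$.

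Two small corrections:
\begin{itemize}
\item Your eigenvalues at $(1,0)$ are $c>0$ and $\varepsilon\theta>0$, which make it a source (unstable node), not a saddle. So the word ``saddle'' in 2.1 cannot be proved as stated; the paper's own $J(1,0)$ has the same two positive eigenvalues, so this is an error in the statement. You should flag it rather than say your computation ``gives 2.1''.
\item In Step 5, ``generically'' is the wrong qualifier. For $a>b$, the window $r_{c1}<r<a$, where $(x_{n=1},1)$ is also stable, is an open parameter set. So $(0,0)$ is the unique stable point for $r<a$ only outside that window. Read literally, 2.2 asserts only that $(0,0)$ is stable, which is immediate from Step 2.
\end{itemize}
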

\begin{IEEEproof} The stability of these equilibrium points can be determined by the signs of the eigenvalues of the Jacobian matrix. At the point $(x, n)$, the Jacobian matrix takes the following form:
\begin{equation}
    J = \begin{pmatrix}
\frac{\partial \dot{x}}{\partial x} & \frac{\partial \dot{x}}{\partial n} \\
\frac{\partial \dot{n}}{\partial x} & \frac{\partial \dot{n}}{\partial n}
\end{pmatrix}.
\label{14}
\end{equation}

Specifically, we can derive the Jacobian matrix for the equilibrium point as follows:

For the equilibrium point $(0, 0)$, we have:
\begin{equation}
    J(0,0) = \begin{pmatrix}
-d & 0 \\
0 & -\varepsilon
\end{pmatrix}.
\label{15}
\end{equation}

For the equilibrium point $(0, 1)$, we have:
\begin{equation}
J(0,1) = \begin{pmatrix}
r - b & 0 \\
0 & \varepsilon
\end{pmatrix}.
\label{16}
\end{equation}

For the equilibrium point $(1, 0)$, we have:
\begin{equation}
J(1,0) = \begin{pmatrix}
c & 0 \\
0 & \varepsilon\theta
\end{pmatrix}.    
\label{17}
\end{equation}

For the equilibrium point $(1, 1)$, we have:
\begin{equation}
J(1,1) = \begin{pmatrix}
a - r & 0 \\
0 & -\varepsilon\theta
\end{pmatrix}    
\label{18}
\end{equation}

For the equilibrium point $(x_{n=1}, 1)$, we have:
\begin{equation}
J(x_{n=1}, 1) = 
\begin{pmatrix}
\frac{(a-r)(b-r)}{a-b} & x_{n=1}(1 - x_{n=1}) \frac{\partial f}{\partial n}(x_{n=1}) \\
0 & \varepsilon[1 - (1 + \theta)x_{n=1}]
\end{pmatrix} . 
\label{19}
\end{equation}

For the equilibrium point $(x^*,n^*)$, we have:
\begin{equation}
J(x^*, n^*) = \begin{pmatrix}
\frac{\theta}{(1 + \theta)^2} \frac{\partial f}{\partial x}(n^*) & \frac{\theta}{(1 + \theta)^2} \frac{\partial f}{\partial n}(x^*) \\
\varepsilon n^*(1 - n^*)(1 + \theta) & 0
\end{pmatrix}    .
\label{20}
\end{equation}

where :
\begin{equation}
\frac{\partial f}{\partial x} = (1 + \theta)[n(b - r - d) + d],
\label{21}
\end{equation}

\begin{equation}
\frac{\partial f}{\partial n} = \frac{1}{n}[d + x(c - d)].    
\label{22}
\end{equation}

The stability of the equilibrium points can be determined based on the signs of the determinant and trace of the Jacobian matrix. We have
\begin{itemize}
\item When Tr$(J(x,n)) < 0$ and Det$(J(x,n)) > 0$, $(x, n)$ is a locally stable point; 
\item When Tr$(J(x,n)) > 0$ and Det$(J(x,n)) > 0$, $(x, n)$ is an unstable focus; \item When Det$(J(x,n)) < 0$, $(x, n)$ is a saddle point; 
\item When Tr$(J(x,n)) = 0$ and Det$(J(x,n)) > 0$, $(x, n)$ is a center. 
\end{itemize}
We select the relatively complex equilibrium points $(x_{n = 1}, 1)$ and $(x^*, n^*)$ for analysis. For the equilibrium point $(x_{n = 1}, 1)$, the determinant and trace of its Jacobian matrix take the following forms
\begin{equation}
\text{Tr}(J(x_{n=1}, 1)) = \frac{(a - r)(b - r)}{a - b} + \varepsilon[1 - (1 + \theta)x_{n=1}],
\label{23}
\end{equation}

\begin{equation}
\text{Det}(J(x_{n=1}, 1)) = \frac{\varepsilon(a - r)(b - r)[1 - (1 + \theta)x_{n=1}]}{a - b}.
\label{24}
\end{equation}

Regarding its determinant, we can conclude that:
\begin{equation}
    \operatorname{sgn}\bigl(\text{Det}(J(x_{n=1}, 1)) \bigr) = \operatorname{sgn}\left( -\frac{1 - (1 + \theta) \frac{r - b}{a - b}}{a - b} \right).
    \label{25}
\end{equation}

We can treat $\frac{1  - (1 + \theta) \frac{r - b}{a - b}}{a - b} $ as a continuous linear function of $r$, and find that its zero is exactly $r_{c1}$. Combining this with the existence conditions for $(x_{n = 1}, 1)$, we can state:

When $a>b$, if $b<r<r_{c1}$, then $\text{Det}(J(x_{n = 1}, 1))<0$; when $r_{c1} < r < a$, $\text{Det}(J(x_{n = 1}, 1))>0$. When $a < b$, if $a < r < r_{c1}$, then $\text{Det}(J(x_{n = 1}, 1))<0$; if $r_{c1} < r < b$, then $\text{Det}(J(x_{n = 1}, 1))>0$.

The analysis of its trace is somewhat complex. Let us first simplify it as follows:

\begin{equation}
\begin{split}
    &\text{Tr}(J(x_{n=1},1)) = \\
    &\frac{r^2 - [a + b + \varepsilon(1 + \theta)]r + \varepsilon(\theta b + a) + ab}{a - b}
\end{split}
\label{26}.
\end{equation}

We can regard this as a continuous quadratic function of $r$. Therefore, we have: As $r \to a$, $\text{Tr}(J(x_{n=1},1)) \to -\theta\varepsilon < 0$; As $r \to b$, $\text{Tr}(J(x_{n=1},1)) \to \varepsilon > 0$. In particular, when $r = r_{c1}$, $\varepsilon[1 - (1 + \theta)x_{n=1}] = 0$. In this case, when $a > b$, $\text{Tr}(J(x_{n=1},1)) < 0$; When $a < b$, $\text{Tr}(J(x_{n=1},1)) > 0$. Due to its continuity, we know that there must exist a sign-changing point $r_1$ located either in $(b, r_{c1})$ or in $(a, r_{c1})$, which can be calculated using the discriminant of a quadratic equation. Combining this with the fact that $(x_{n=1},1)$ lies on the interval and has boundary values, we can analyze that:
\begin{figure}[!t]
    \centering
    \includegraphics[width=0.5\textwidth]{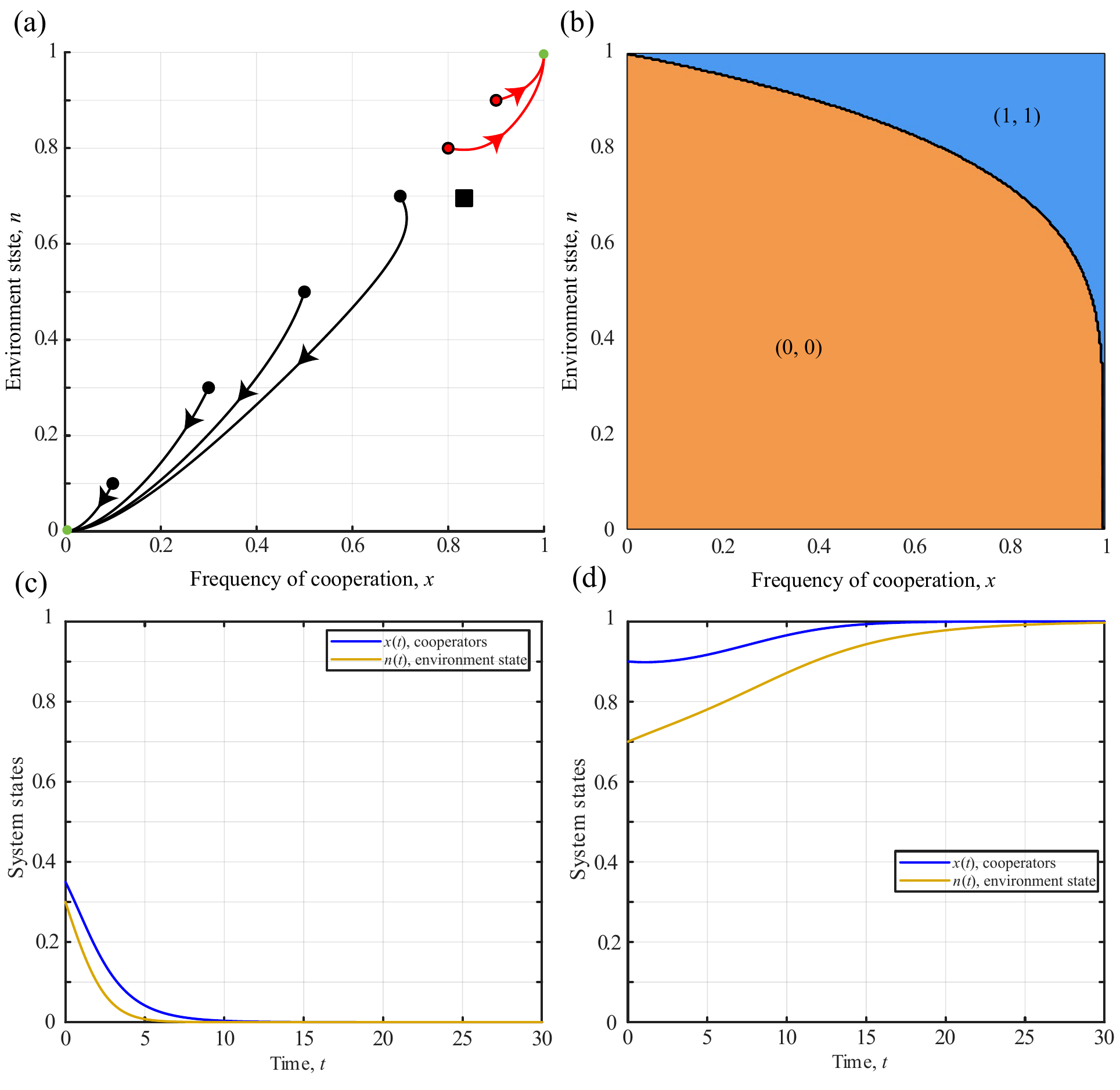}
\caption{Numerical verification of the system's bistability under clause 2.3 of Theorem 2. Panel (a) shows the phase portrait of the system in the $(x, n)$ plane, where black trajectories converge to $(0,0)$ and red trajectories converge to $(1,1)$. Panel (b) illustrates the attractor regions of these two stable points, with the orange and blue regions representing the attractor regions of $(0,0)$ and $(1,1)$, respectively. Panel (c) shows the time evolution of the system state with initial conditions $x=0.35$ and $n = 0.3$. Panel (d) shows the time evolution of the system state with initial conditions $x = 0.9$ and $n = 0.7$. Parameters are set as $x = 0.35$ and $n = 0.3$. Parameters are set as $R_0 = 3$, $S_0 = 2$, $T_0 = 4.5$, $P_0 = 2.5$, $R_1 = 3.5$, $S_1 = 2$, $T_1 = 5$, $P_1 = 3$, $\varepsilon = 1$, $\theta = 0.2$, $r = 2$.}
\label{Figure1}
\end{figure}

When $a > b$, 
\begin{itemize}
\item if $b < r < r_1$, then $\text{Tr}(J(x_{n=1},1)) > 0$;  
\item if $r_1 < r < a$, then $\text{Tr}(J(x_{n=1},1)) < 0$. 
\end{itemize}

When $a < b$, 
\begin{itemize}
\item if $a < r < r_1$, then $\text{Tr}(J(x_{n=1},1)) < 0$;  
\item if $r_1 < r < b$, then $\text{Tr}(J(x_{n=1},1)) > 0$. 
\end{itemize}
Combining the signs of the determinant and trace, we can derive the conclusion of Theorem 2.

For the equilibrium point $(x^*, n^*)$, the determinant and trace of its Jacobian matrix take the following forms:
\begin{equation}
    \text{Tr}(J(x^*, n^*)) = \frac{\theta}{(1 + \theta)^2} \frac{\partial f}{\partial x}(n^*),
    \label{27}
\end{equation}
\begin{equation}
    \text{Det}(J(x^*, n^*)) = -\varepsilon n^*(1 - n^*)\frac{\theta}{(1 + \theta)} \frac{\partial f}{\partial n}(x^*).
    \label{28}
\end{equation}

In particular, we focus on the determinant, for which it is clear that:

\begin{equation}
    \operatorname{sgn}(\text{Det}(J(x^*, n^*))) = \operatorname{sgn}\left(-\frac{\partial f}{\partial n}(x^*)\right).
    \label{29}
\end{equation}

Simplifying $\frac{\partial f}{\partial n}(x^*)$ yields:
\begin{equation}
\begin{split}
    &\frac{\partial f}{\partial n}(x^*) = \\
    &\frac{\theta d + c - (\theta b + a) + (1 + \theta)r}{\theta d + c} \cdot \left[d + \frac{1}{1 + \theta} \cdot (c - d)\right].
    \label{30}
\end{split}
\end{equation}
Combining this with the existence conditions for the internal equilibrium point, we have $\theta d + c - (\theta b + a) + (1 + \theta)r > 0$ and $d + \frac{1}{1 + \theta} \cdot (c - d) = \frac{\theta d + c}{1 + \theta} > 0$. Consequently, $- \frac{\partial f}{\partial n}(x^*) < 0$. So \(\text{Det}(J(x^*, n^*) \) is negative on its existence interval. Furthermore, $(x^*, n^*)$ is an unstable saddle point throughout its existence interval.
\end{IEEEproof}

To facilitate the summary of our results, Table~\ref{table1} presents the existence and stability conditions for the system’s equilibrium points when a punishment mechanism is imposed under resource-abundant conditions. In the following, we present specific numerical examples to verify the aforementioned theoretical analysis results when $a > b$ holds.

\textbf{Example 3.1} Numerical examples are provided to verify that the system exhibits bistability, where equilibrium points $(0,0)$ and $(1,1)$ are both stable as specified in Clause 2.3 of Theorem 2.

Figure~\ref{Figure1}(a) shows the phase diagram of the system. All four corner equilibrium points exist, and there is also an equilibrium point $(x_{n = 1}, 1)$ on the boundary. The internal equilibrium point $(x^*, n^*)$ exists but is unstable. Among these, $(0,0)$ and $(1,1)$ are stable. Fig.~\ref{Figure1}(c) shows the time evolution of the system’s states with initial conditions $x = 0.35$ and $n = 0.3$. The trajectory converges to $(0,0)$, predicting the occurrence of the tragedy of the commons. Fig.~\ref{Figure1}(d) shows the time evolution of the system’s states with initial conditions $x = 0.9$ and $n = 0.7$. The trajectory converges to $(1,1)$, verifying the stability of the ideal state in which all individuals adopt cooperative behavior and the environment remains in a state of abundance. Fig.~\ref{Figure1}(b) illustrates the attractor regions of the two equilibrium points mentioned above. The orange region converges to $(0,0)$, while the blue region converges to $(1,1)$. It can be seen that the attractor region of the ideal state is relatively small, requiring an extremely high initial level of cooperation and favorable environmental conditions to be reached.

\textbf{Example 3.2} Numerical illustration of the system's convergence to a bistable state at $(0,0)$ and $(x_{n=1},1)$ subject to the conditions in clause 2.4 of Theorem 2. 
\begin{figure}[!t]
    \centering
    \includegraphics[width=0.5\textwidth]{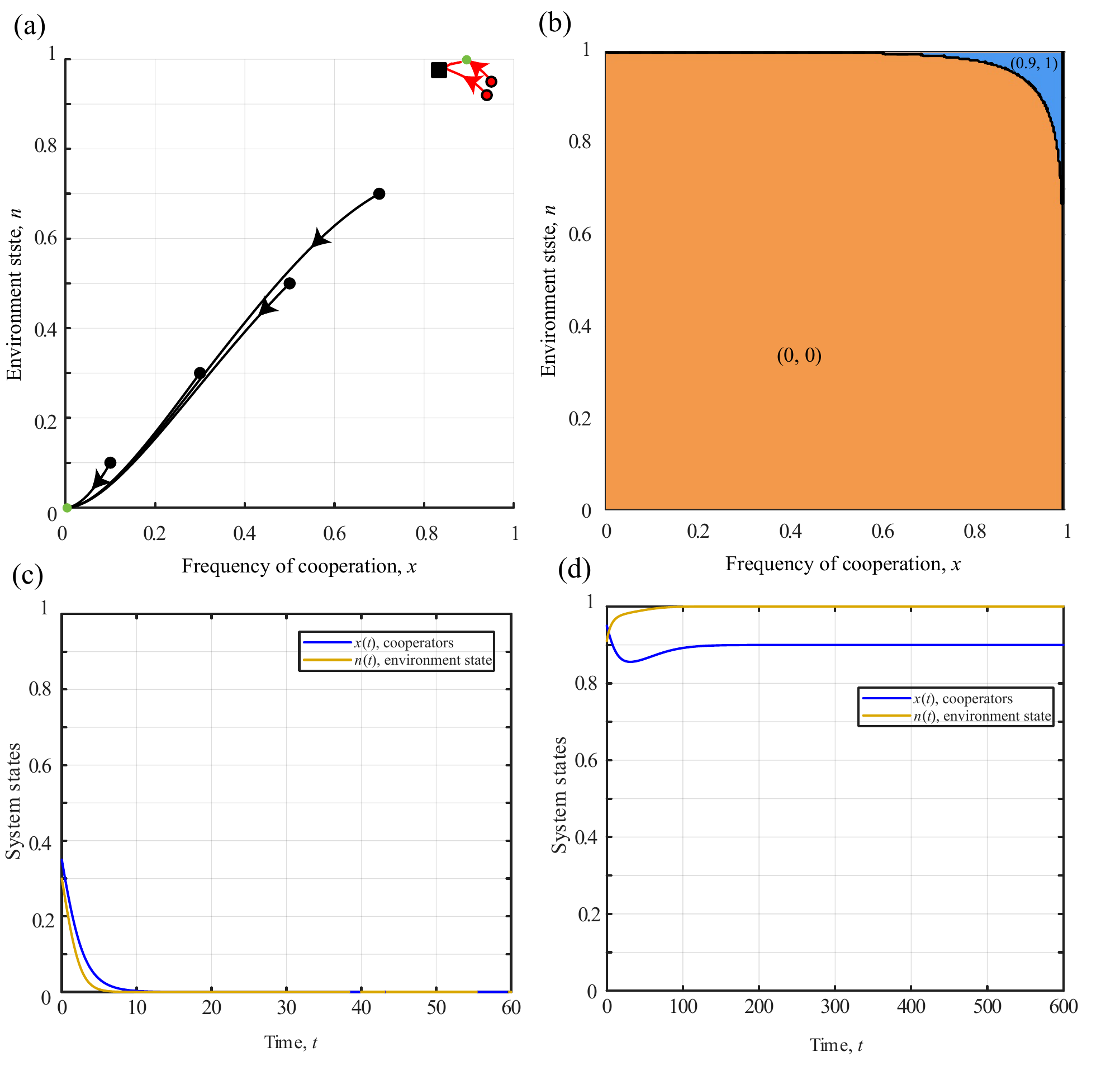}
\caption{Numerical verification of the stability of the boundary equilibrium $(x_{n=1}, 1)$ and the origin $(0,0)$ under clause 2.4 of Theorem 2.
Panel (a) shows the phase diagram, where $(0,0)$ and $(x_{n = 1}, 1)$ are stable points. Panel (b) illustrates the attractor regions of the two stable points, with the orange and blue regions representing the attractor regions of $(0,0)$ and $(x_{n = 1}, 1)$, respectively. Panel (c)  shows the time evolution of the system state with initial conditions $x=0.35$ and $n=0.3$, while Panel (d) shows the time evolution of the system state with initial conditions $x=0.95$ and $n=0.91$. Parameters are set as $R_0 = 3$, $S_0 = 2$, $T_0 = 4.5$, $P_0 = 2.5$, $R_1 = 3.5$, $S_1 = 2$, $T_1 = 5$, $P_1 = 3$, $\varepsilon = 1$, $\theta = 0.2$, $r = 1.45$.}
\label{Figure2}
\end{figure}

Figure~\ref{Figure2}(a) shows the phase diagram of the system. In this diagram, $(0,0)$ and $(x_{n = 1}, 1)$ are stable. Fig.~\ref{Figure2}(c) shows the time evolution of the system’s state with initial conditions $x = 0.35$ and $n = 0.3$. The trajectory converges to $(0,0)$, illustrating that the system eventually falls into the tragedy of the commons. Fig.~\ref{Figure2}(d) shows the time evolution of the system’s state with initial conditions $x = 0.95$ and $n = 0.91$. The trajectory converges to $(x_{n = 1}, 1)$, verifying the stability of boundary convergence, which signifies a resilient steady-state where stable cooperation maintains environmental abundance, effectively averting the tragedy of the commons. The partitioning of the phase space into basins of attraction is visualized in Fig.~\ref{Figure2}(b). Trajectories originating in the orange region gravitate toward $(0,0)$, in contrast to those in the blue region, which converge to the boundary equilibrium $(x_{n = 1}, 1)$. It is clear that the attraction basin for the boundary equilibrium is much smaller than the basin for $(0,0)$, which indicates the need for more demanding initial conditions.

The above example shows that when resources are abundant, if the temptation to defect is greater than the risk of being exploited ($a > b$), individuals tend to take advantage of the situation. At this point, when the punishment intensity $r$ is slightly higher than a critical value ($r > r_{c1}$), the system exhibits a bistable state: one attractor is the boundary state of ``partial cooperation but extremely abundant resources”, and the other is the tragedy of the commons state. When the punishment intensity is increased to the level of the temptation to defect ($r > a$), the system’s attractor optimizes toward a fully cooperative state. Similarly, when resources are abundant, and the temptation to defect exceeds the risk of exploitation ($a < b$), we can draw analogous conclusions. This suggests that imposing punishment in resource-rich environments helps prevent problems before they arise, suppresses partial defection, and allows the system the opportunity to evolve toward the ideal state of full cooperation.

\subsection{Evolutionary Dynamics Analysis of Punishment Under Resource Scarcity}

Subsequently, we analyze the evolutionary dynamics of the coupled system when punitive measures are imposed on defectors under conditions of resource scarcity. In this case, the system dynamics become highly complex as $r$ varies. First, we present the following theorem, which specifies the existence conditions for the system’s seven equilibrium points.

\begin{theorem} Apart from the corner equilibrium points $(0,0)$, $(0,1)$, $(1,0)$, and $(1,1)$, the system has an internal equilibrium point $(x^*, n^*)$ when $r > r_{c2} = \frac{\theta d + c}{1 + \theta}$. The existence conditions for the boundary equilibrium points $(x_{n = 0}, 0)$ depend on the magnitudes of $c$ and $d$. 
\begin{itemize}
\item when $c>d$, the existence condition is $d < r < c$; 
\item when $c<d$, the existence condition is $c < r < d$; 
\item $(x_{n = 1}, 1)$ does not exist in this system. Where:
\begin{equation}
   (x^*, n^*) = \left( \frac{1}{1 + \theta}, \frac{\theta d + c - (1 + \theta)r}{\theta d + c - (\theta b + a) - (1 + \theta)r} \right) ,
   \label{31}
\end{equation}
\begin{equation}
    (x_{n=0}, 0) = \left( \frac{r - d}{c - d}, 0 \right),
    \label{32}
\end{equation}
\begin{equation}
    (x_{n=1}, 1) = \left( \frac{b}{b - a}, 1 \right).
    \label{33}
\end{equation}
\end{itemize}
\end{theorem}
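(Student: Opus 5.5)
The plan mirrors the proof of Theorem~\ref{thm:main}: first compute the fitness difference $f(x,n)$ explicitly for the payoff matrix \eqref{5}, then locate the equilibria by solving $\dot{x}=\dot{n}=0$ on the interior and on the edges $n=0$ and $n=1$. Using \eqref{5}, the expected payoff of a cooperator is $x[(1-n)R_0+nR_1]+(1-x)[(1-n)S_0+nS_1]$. The expected payoff of a defector is $x[(1-n)(T_0-r)+nT_1]+(1-x)[(1-n)(P_0-r)+nP_1]$. Subtracting and using the definitions of $a,b,c,d$ gives
\begin{equation*}
f(x,n) = -(1-n)\bigl[xc+(1-x)d-r\bigr] - n\bigl[xa+(1-x)b\bigr].
\end{equation*}
The corner points are equilibria trivially, because $x(1-x)$ and $n(1-n)$ vanish there.

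For the interior point, $\dot{n}=0$ with $0<n<1$ forces $x^*=1/(1+\theta)$. Substituting this into $f=0$, I set $g_0=\frac{\theta d+c}{1+\theta}-r$ and $g_1=\frac{\theta b+a}{1+\theta}$. Then $f(x^*,n)=0$ reads $(1-n)g_0+ng_1=0$, so $n^*=g_0/(g_0-g_1)$. Multiplying numerator and denominator by $1+\theta$ yields exactly \eqref{31}.

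The key step is the existence condition $0<n^*<1$, which holds exactly when $g_0$ and $g_1$ have opposite signs. Since the game at each state is a Prisoner's Dilemma, $a,b>0$, hence $g_1>0$. The condition therefore reduces to $g_0<0$, i.e.\ $r>r_{c2}=\frac{\theta d+c}{1+\theta}$. Conversely, if $r>r_{c2}$, the numerator $\theta d+c-(1+\theta)r$ is negative, and the denominator is that same quantity minus the positive number $\theta b+a$. The denominator is therefore more negative, so the ratio lies strictly in $(0,1)$.

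For the boundaries, on $n=0$ the condition $f(x,0)=0$ gives $xc+(1-x)d=r$, so $x_{n=0}=(r-d)/(c-d)$. Requiring $0<x_{n=0}<1$ splits into the two cases exactly as in Theorem~\ref{thm:main}. If $c>d$, we need $0<r-d<c-d$, i.e.\ $d<r<c$. If $c<d$, we need $c-d<r-d<0$, i.e.\ $c<r<d$.

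On $n=1$ the penalty disappears from $f$, and $f(x,1)=0$ gives $x=b/(b-a)$. Since $a,b>0$, either $b-a<0$, making $x<0$, or $b-a>b$ fails, making $x>1$ (the case $a=b$ gives no solution). Hence $(x_{n=1},1)$ never lies in $(0,1)$.

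I expect no real obstacle. The only delicate point is the sign bookkeeping in the $0<n^*<1$ argument, which relies essentially on $\theta b+a>0$ (guaranteed by the PD ordering $T_1>R_1$, $P_1>S_1$) and on $\theta>0$.
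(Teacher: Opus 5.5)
Your proposal is correct and follows the route the paper intends; the paper only remarks that the argument mirrors Theorem 1: solve $\dot{x}=\dot{n}=0$ in the interior and on the edges $n=0$ and $n=1$, then impose $0<x,n<1$, with your explicit $f(x,n)$ and your view of $n^*$ as the root of $(1-n)g_0+ng_1$ supplying details the paper omits. One small fix: the phrase ``$b-a>b$ fails'' should read ``if $b-a>0$ then $b-a<b$ since $a>0$, so $b/(b-a)>1$'', or more simply note that $xa+(1-x)b>0$ for all $x\in[0,1]$.
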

The proof of Theorem 3 follows a similar approach to that of Theorem 1, we will not elaborate further here.

The following theorem provides the stability conditions for the six equilibrium points.

\begin{theorem}
The stability of the six equilibrium points of the system satisfies the following conditions:

4.1 The stability of the corner points $(1,0)$ and $(1,1)$ is independent of $r$. In particular, $(0,1)$ and $(1,1)$ are both unstable saddle points.

4.2 When $r < d$, $(0,0)$ is a locally stable point.

4.3 The stability of the boundary equilibrium point $(x_{n=0},0)$ depends on the parameters $c$ and $d$.Assume that $c > d$ holds. Then, when $d < r < r_{c2}$, $(x_{n=0},0)$ is a locally stable point.

4.4 The stability of the internal equilibrium point $(x^*,n^*)$ depends on the parameters $a$, $b$, $c$, and $d$.

\begin{itemize}
\item Assuming $c > d$ and $a > b$, then $(x^*, n^*)$ is stable when $r > r_{c2}$;

\item Assuming $c > d$ and $a < b$, then $(x^*, n^*)$ is stable when $r_{c2} < r < r_{02}$;

\item Assuming $c < d$ and $a > b$, then $(x^*, n^*)$ is stable when $r > r_{02}$.
\end{itemize}
Here, $r_{02} = \frac{ad - bc}{a - b}$; the relationship between $r_{c2}$ and $r_{02}$ is influenced by the parameters $a$ and $b$.
\end{theorem}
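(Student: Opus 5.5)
The plan is to compute the Jacobian at each equilibrium explicitly for the scarcity-penalty payoff matrix (\ref{5}), and then read off stability from its trace and determinant, as in the proof of Theorem 2.

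\textbf{Fitness difference.} From (\ref{5}), the fitness difference is
\[
f(x,n)=-x\big[(1-n)(c-r)+na\big]-(1-x)\big[(1-n)(d-r)+nb\big].
\]
Its partial derivatives are
\[
\frac{\partial f}{\partial x}=(1-n)(d-c)+n(b-a),\qquad \frac{\partial f}{\partial n}=-x\big[a-c+r\big]-(1-x)\big[b-d+r\big].
\]
The Jacobian of (\ref{6})--(\ref{7}) has the same shape as (\ref{14}). At the corners and on the edges $n\in\{0,1\}$ it is upper triangular. Its diagonal entries there are $(1-2x)f(x,n)$ (or $x(1-x)\partial_x f$ on an edge) and $\varepsilon(1-2n)[(1+\theta)x-1]$.

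\textbf{Corner points.} The eigenvalues are as follows.
\begin{itemize}
\item At $(0,0)$: $r-d$ and $-\varepsilon$. This gives 4.2: stable iff $r<d$.
\item At $(0,1)$: $-b<0$ and $\varepsilon>0$, so it is a saddle.
\item At $(1,0)$: $c-r$ and $\varepsilon\theta>0$, so it is unstable for every $r$.
\item At $(1,1)$: $a>0$ and $-\varepsilon\theta$, so it is a saddle.
\end{itemize}
Together these give 4.1. Since $a>0$ and $b>0$ always, the $r$-independence holds automatically.

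\textbf{Boundary point $(x_{n=0},0)$.} The Jacobian is triangular with diagonal entries $x_{n=0}(1-x_{n=0})(d-c)$ and $\varepsilon[(1+\theta)x_{n=0}-1]$. If $c>d$, the first entry is negative. The second is negative iff $(r-d)/(c-d)<1/(1+\theta)$, which rearranges to $r<r_{c2}$. Combined with the existence interval $d<r<c$ from Theorem 3, this gives 4.3.

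\textbf{Interior point $(x^*,n^*)$.} Here the Jacobian has the form (\ref{20}). Hence
\[
\mathrm{Det}=-\varepsilon n^*(1-n^*)\theta(1+\theta)^{-1}\,\partial_n f(x^*),
\]
and the trace is a positive multiple of $\partial_x f(n^*)$.

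For the determinant, write $N=\theta d+c-(1+\theta)r$ and $K=\theta b+a>0$. Substituting $x^*=1/(1+\theta)$ gives
\[
\partial_n f(x^*)=-\frac{K-N}{1+\theta}.
\]
On the existence range $r>r_{c2}$ of Theorem 3 we have $N<0$, so $K-N>0$ and the determinant is positive. Stability therefore reduces to the single sign condition
\[
(1-n^*)(d-c)+n^*(b-a)<0.
\]
This is where the case split of 4.4 enters.
\begin{itemize}
\item If $c>d$ and $a>b$, both terms are negative, so the point is stable on the whole range $r>r_{c2}$.
\item In the mixed cases the left side is linear in $n^*$ and vanishes at $n_0=(c-d)/\big((c-d)+(b-a)\big)$. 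So the condition becomes $n^*<n_0$ when $c>d,\ a<b$, and $n^*>n_0$ when $c<d,\ a>b$.
\end{itemize}

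\textbf{Converting $n^*$ versus $n_0$ into $r$ versus $r_{02}$.} This is the step that needs care. First note that $n^*=|N|/(|N|+K)$ is strictly increasing in $r$ for $r>r_{c2}$. Hence $n^*\lessgtr n_0$ is equivalent to $r\lessgtr \hat r$, where $\hat r$ solves
\[
\frac{|N|}{K}=\frac{n_0}{1-n_0}=\frac{c-d}{b-a}.
\]
Solving,
\[
(1+\theta)\hat r=\theta d+c+\frac{(\theta b+a)(c-d)}{b-a}=\frac{(1+\theta)(bc-ad)}{b-a},
\]
so $\hat r=r_{02}$. This yields stability on $r_{c2}<r<r_{02}$ in the second case and on $r>r_{02}$ in the third case, intersected with $r>r_{c2}$.

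I expect the main obstacle to be this bookkeeping. One must confirm that $n^*\in(0,1)$ exactly when $r>r_{c2}$, and track the signs of $c-d$ and $b-a$ so the inequality direction is right. One should also remark that in the third case the threshold is effectively $\max(r_{c2},r_{02})$, which reflects the parameter-dependent ordering of $r_{c2}$ and $r_{02}$ noted in the statement.
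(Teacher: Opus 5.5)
Your proof is correct. Its overall scheme is the paper's: linearize, read stability from trace and determinant, and reduce the interior case to the sign of $\partial_x f(n^*)$ once $\mathrm{Det}>0$ is shown.

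You diverge in how you settle the sign of the trace at $(x^*,n^*)$. The paper substitutes $n^*$ and reduces $\partial_x f(n^*)$ to the rational function of $r$ in (\ref{39}). Its sign is that of the linear expression $(a-b)r+bc-ad$, and the paper then needs the separate identity (\ref{40}) to place $r_{02}$ relative to $r_{c2}$ in each sign case.

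You instead work with $\partial_x f=(1-n)(d-c)+n(b-a)$, which does not involve $r$ and is linear in $n$. Its sign pattern is therefore fixed by the signs of $d-c$ and $b-a$ alone, and $r$ enters only through the increasing bijection $r\mapsto n^*(r)$ from $(r_{c2},\infty)$ onto $(0,1)$. This makes the case $c>d$, $a>b$ immediate. In the mixed cases it gives $r_{02}>r_{c2}$ for free, because $n_0\in(0,1)$ is attained at some $r>r_{c2}$. So your ``$\max(r_{c2},r_{02})$'' is simply $r_{02}$, consistent with (\ref{40}). The paper's route, in turn, gives the trace as an explicit function of $r$ vanishing at $r_{02}$, which it reuses for the Hopf claim in Theorem 5.

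You also carry out the corner, edge, and determinant computations that the paper only asserts are ``similar to Theorem 2.'' One small addition for 4.3: note that $r_{c2}$ is a weighted average of $d$ and $c$, so $(d,r_{c2})$ lies inside the existence interval $(d,c)$.
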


\begin{IEEEproof}    
In this theorem, the proofs for 4.1–4.3 are similar to those in Theorem 2, so we will not repeat them here. Instead, we will focus on analysing the proof of 4.4.

First, we can determine the determinant and trace of the Jacobian matrix at an internal equilibrium point:
\begin{equation}
    \text{Tr}(J(x^*, n^*)) = \frac{\theta}{(1 + \theta)^2} \frac{\partial f}{\partial x}(n^*),
    \label{34}
\end{equation}
\begin{equation}
    \text{Det}(J(x^*, n^*)) = -\varepsilon n^*(1 - n^*) \frac{\theta}{1 + \theta} \frac{\partial f}{\partial n}(x^*).
    \label{35}
\end{equation}
Where:
\begin{equation}
    \frac{\partial f}{\partial x} = (1 + \theta)[n(b + r - d) + d - r],
    \label{36}
\end{equation}
\begin{equation}
    \frac{\partial f}{\partial n} = \frac{1}{n}[d - r + x(c - d)].
    \label{37}
\end{equation}
The analysis of the determinant is similar to that in Theorem 2, then we state the conclusion directly here: $\text{Det}(J(x^*, n^*))$ is positive on its domain of existence.

Next, we analyze the trace of the Jacobian matrix. Clearly, we have:
\begin{equation}
    \operatorname{sgn}(\text{Tr}(J(x^*, n^*))) = \operatorname{sgn}\left(\frac{\partial f}{\partial x}(n^*)\right).
    \label{38}
\end{equation}
Therefore, we can simplify $\frac{\partial f}{\partial x}(n^*)$ to:
\begin{equation}
    \frac{\partial f}{\partial x}(n^*) = (1 + \theta) \left[ \frac{(a - b)r + (bc - ad)}{\theta d + c - (\theta b + a) - (1 + \theta)r} \right].
    \label{39}
\end{equation}
Combining this with the existence conditions for $(x^*, n^*)$, we obtain $\theta d + c - (\theta b + a) - (1 + \theta)r < 0$. Therefore, we focus on the sign of $(a - b)r + (bc - ad)$. Clearly, this can be regarded as a continuous linear function of $r$, with a zero root at $r_{02} = \frac{ad-bc}{a-b}$. We now need to consider the relationship between the $r_{02}$ and $r_{c2}$. Taking the difference and simplifying yields:
\begin{equation}
    r_{c2} - r_{02} = \frac{\theta b + a}{1 + \theta} \cdot \frac{c - d}{a - b}.
    \label{40}
\end{equation}
Consequently, the relationship between the $r_{02}$ and $r_{c2}$ depends on the sign of $\frac{c-d}{a-b}$. Thus, we can comprehensively determine the sign of $\text{Tr}(J(x^*, n^*))$:

When $c > d$ and $a > b$, 
\begin{itemize}
\item if $r > r_{c2}$, then $\text{Tr}(J(x^*,n^*)) < 0$.  
\end{itemize}

When $c > d$ and $a < b$, 
\begin{itemize}
\item if $r_{c2} < r < r_{02}$, then $\text{Tr}(J(x^*,n^*)) < 0$; 
\item if $r > r_{02}$, then $\text{Tr}(J(x^*,n^*)) > 0$.
\end{itemize}

When $c < d$ and $a > b$, 
\begin{itemize}
\item if $r_{c2} < r < r_{02}$, then $\text{Tr}(J(x^*,n^*)) > 0$;  
\item if $r > r_{02}$, then $\text{Tr}(J(x^*,n^*)) < 0$.  
\end{itemize}

When $c < d$ and $a < b$, 
\begin{itemize}
\item if $r > r_{c2}$, then $\text{Tr}(J(x^*,n^*)) > 0$.  
\end{itemize}

Furthermore, by combining the signs of its Jacobian determinant and trace, we can derive the conclusion 4.4 in Theorem 4.
\end{IEEEproof} 

The following theorem specifies the conditions under which coupled systems exhibit other dynamical behaviors.

\begin{theorem} When a penalty is imposed under resource scarcity, the system exhibits heteroclinic\cite{NYCZKA2012317} or centrally stable dynamical behavior.

5.1 The conditions for the occurrence of a Hopf bifurcation depend on the signs of $d-c$ and $a-b$.
Assuming $d-c$ and $a-b$ have the same sign, which means $c > d$ and $a < b$ or $c < d$ and $a > b$ hold, the system exhibits a Hopf bifurcation at $r = r_{02}$ and can generate a family of periodic solutions in its vicinity.

5.2 The conditions for the emergence of heterocyclic cycle depend on the parameters $a$, $b$, $c$, and $d$.
\begin{itemize}
\item If $c > d$ and $a < b$, the condition for the emergence of a heteroclinic cycle is $r > r_{02}$;

\item If $c < d$ and $a > b$, the condition for the emergence of a heteroclinic cycle is $d < r < r_{02}$;

\item If $c < d$ and $a < b$, the condition for the emergence of a heteroclinic cycle is $r > d$.
\end{itemize}
\end{theorem}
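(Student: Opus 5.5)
The plan is to treat 5.1 and 5.2 separately, relying throughout on the equilibrium and trace/determinant information already obtained in Theorems 3 and 4.

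\textbf{Part 5.1 (Hopf bifurcation at $r=r_{02}$).} I would first check the three hypotheses of the Hopf bifurcation theorem along the family of interior equilibria $(x^*,n^*)$ parametrized by $r$.

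The first hypothesis is existence at the bifurcation value. When $d-c$ and $a-b$ have the same sign, (\ref{40}) gives $r_{02}>r_{c2}$. Hence, by Theorem 3, $(x^*,n^*)$ exists in a neighbourhood of $r_{02}$.

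The second hypothesis is a purely imaginary pair. By the proof of Theorem 4, $\text{Det}(J(x^*,n^*))>0$ there. By (\ref{38})--(\ref{39}), $\text{Tr}(J(x^*,n^*))=0$ exactly at $r=r_{02}$. So the eigenvalues at $r_{02}$ are $\pm i\sqrt{\text{Det}}$.

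The third hypothesis is transversality. Writing $\text{Tr}=\frac{\theta}{1+\theta}\,\frac{(a-b)r+(bc-ad)}{\theta d+c-(\theta b+a)-(1+\theta)r}$, the denominator is nonzero at $r_{02}$. Differentiating in $r$ at $r_{02}$ therefore gives a derivative proportional to $(a-b)$ divided by that denominator, which is nonzero. The real part of the eigenvalues thus crosses zero with nonzero speed, and the sign analysis in Theorem 4 shows the direction of crossing in each of the two admissible cases.

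To obtain the \emph{family} of periodic orbits (the paper speaks of central periodic orbits in a Hamiltonian framework), I would not compute a Lyapunov coefficient directly. Instead I would try to show that at $r=r_{02}$ the system is conservative.
\begin{itemize}
\item Pass to logit coordinates $u=\ln\frac{x}{1-x}$ and $v=\ln\frac{n}{1-n}$. This gives $\dot u=f(x,n)$ and $\dot v=\varepsilon[(1+\theta)x-1]$.
\item Look for a Dulac multiplier $B=x^{\alpha}(1-x)^{\beta}n^{\gamma}(1-n)^{\delta}$ that makes the rescaled field divergence-free.
\item Since $f$ is affine in $x$ and affine in $n$, the divergence condition reduces to a few polynomial identities in $(x,n)$. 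I expect these to be solvable precisely when $\partial f/\partial x$ vanishes at $n^*$, i.e.\ when $r=r_{02}$.
\end{itemize}
A first integral $H$ with an isolated extremum at $(x^*,n^*)$ then makes the equilibrium a nonlinear center surrounded by a continuum of closed orbits. For $r$ near $r_{02}$ this continuum is what is meant by ``a family of periodic solutions in its vicinity.''

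\textbf{Part 5.2 (heteroclinic cycles).} Here the plan is a boundary-orientation argument on the invariant square $[0,1]^2$.
\begin{itemize}
\item On each edge, compute the direction of flow from the sign of $f$ (for $n=0,1$) or of $(1+\theta)x-1$ (for $x=0,1$).
\item Using the corner Jacobians (analogues of (\ref{15})--(\ref{18}) with $r$ moved to the scarce matrix) and the existence conditions of Theorem 3, show that in each stated parameter range all four corners are saddles. In particular, $(0,0)$ becomes a saddle once $r>d$.
\item Show that no boundary equilibrium $(x_{n=0},0)$ or $(x_{n=1},1)$ lies in the open edges, so the four saddle connections along the edges form a single oriented loop.
\item Combine this with Theorem 4: in each listed case the interior equilibrium, when it exists, is repelling ($\text{Tr}>0$, $\text{Det}>0$). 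For example, in the case $c>d$, $a<b$ this is the range $r>r_{02}$.
\item Conclude by Poincaré--Bendixson, together with the absence of interior periodic orbits, that the $\omega$-limit set of interior orbits is the boundary heteroclinic cycle. To rule out interior periodic orbits, I would use the sign-definite divergence obtained from the Dulac function above when $r\neq r_{02}$.
\end{itemize}

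\textbf{Expected main obstacle.} I expect the hardest step to be the construction of the first integral and Dulac function in 5.1. A naive multiplier $1/(x(1-x)n(1-n))$ does not cancel the cross term of $f$, so the exponents must be tuned to the coefficients $a,b,c,d,r$. If that construction fails, my fallback is to compute the first Lyapunov coefficient at $r_{02}$, which would establish a genuine (super- or subcritical) Hopf bifurcation.

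A secondary difficulty is the third case of 5.2 ($c<d$, $a<b$, $r>d$). There $r$ may lie below $r_{c2}$, so no interior equilibrium exists. Then Poincaré--Bendixson cannot produce an interior limit cycle, and the boundary cycle must attract or repel directly. I would decide which by comparing the product of eigenvalue ratios at the four saddles, i.e.\ the standard stability criterion for heteroclinic cycles.
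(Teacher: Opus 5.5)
Your argument is correct, and it follows a genuinely different and more rigorous route than the paper.

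\textbf{What the paper does.} For 5.1 it only reads off from the proof of Theorem 4 that $\text{Tr}(J(x^*,n^*))$ vanishes at $r=r_{02}$ while $\text{Det}>0$, and that $r_{02}>r_{c2}$ exactly when $d-c$ and $a-b$ share a sign. It checks neither transversality nor nondegeneracy. For 5.2 it works in dilemma-strength coordinates. It writes $D_r'^*$ and $D_g'^*=-\theta D_r'^*$ at $(x^*,n^*)$; their numerator $(a-b)r+bc-ad$ is the same quantity that controls the trace. It then appeals to the criterion ``no stable equilibrium plus a Chicken game at the interior point gives a heteroclinic cycle.'' This is short and fits the paper's GID--RAD narrative, but it is an appeal to a classification rather than a proof. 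Its signs are also off. On the existence range the denominator of $D_r'^*$ equals $-\bigl[(a+\theta b)(R_0-P_0+r)+(R_1-P_1)\bigl((1+\theta)r-c-\theta d\bigr)\bigr]<0$. So the interior game is in fact a Stag Hunt exactly when $(x^*,n^*)$ repels, and the two slips cancel in the stated ranges.

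\textbf{What your route buys.} Your phase-plane argument avoids all of this, and the multiplier you anticipate does exist. Take $F=(\dot x,\dot n)$, $K=(1+\theta)r-(c+\theta d)>0$, $\lambda=\theta(r-d)/K$, $s=(\lambda-1)(r-d)/\varepsilon$ and $t=(\lambda-1)b/\varepsilon$. Then $B=x^{\lambda-2}(1-x)^{-\lambda-1}n^{s-1}(1-n)^{t-1}$ gives $\nabla\cdot(BF)=\frac{\theta(a-b)(r_{02}-r)}{K}\,x^{\lambda-1}(1-x)^{-\lambda-1}n^{s}(1-n)^{t-1}$. This vanishes identically exactly at $r=r_{02}$. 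There it yields a first integral whose Hessian at $(x^*,n^*)$ has determinant $B^2\det J>0$, so the equilibrium is a true nonlinear center. For $r\neq r_{02}$ the divergence has the sign of the trace, which excludes interior periodic orbits. Combine this with:
\begin{itemize}
\item your edge analysis (all three listed ranges give $r>\max(c,d)$, which makes the boundary an oriented cycle of saddles);
\item the fact that each corner's stable manifold lies in an edge;
\item the fact that a hyperbolic source never lies in an $\omega$-limit set.
\end{itemize}
Poincar\'e--Bendixson then forces every interior orbit other than $(x^*,n^*)$ onto the boundary cycle, which the paper's criterion only asserts.

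Two corrections. First, the same $B$ rules out periodic orbits for every $r\neq r_{02}$. So the closed orbits exist only at $r=r_{02}$ (a degenerate, ``vertical'' Hopf bifurcation), not for $r$ near $r_{02}$. For the same reason your Lyapunov-coefficient fallback would return $\ell_1=0$ and could not produce a super- or subcritical bifurcation. Second, the difficulty you foresee for $c<d$, $a<b$ does not arise. Since $r_{c2}=\frac{\theta d+c}{1+\theta}$ is a convex combination of $c$ and $d$, the condition $r>d$ forces $r>r_{c2}$. The interior equilibrium therefore exists and is a source, and your main argument already covers this case without any eigenvalue-ratio test.
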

\begin{IEEEproof}
 When the Jacobian matrix of an equilibrium point satisfies the conditions $\text{Tr}(J(x,n)) = 0$ and $\text{Det}(J(x,n)) > 0$, the system exhibits a Hopf bifurcation.

From the proof of Theorem 4, we can derive the content 5.1 of Theorem 5.

Next, we analyze the conditions for the emergence of a heteroclinic cycle. We can use dilemma strength to analyze the system’s behavior. When the system has no stable points, if the game structure of the internal equilibrium point $(x^*, n^*)$ is a Chicken game, the system will evolve toward a heteroclinic cycle. Based on the definitions of $D_g'$ and $D_r'$, we can express the dilemma strength of the internal equilibrium point as follows:
\begin{equation}
    D_r'^* = \frac{(a - b)r + bc - ad}{-[\theta b + a + (1 + \theta)f]r + f(\theta d + c) - e(\theta b + a)},
    \label{41}
\end{equation}
\begin{equation}
    \begin{split}
        &D_g'^* = \\
    &\frac{\theta[(b - a)r + ad - bc]}{-[\theta b + a + (1 + \theta)f]r + f(\theta d + c) - e(\theta b + a)} = -\theta D_r'^*,
    \end{split}
    \label{42}
\end{equation}
where $e = R_0 - P_0 > 0$ and $f = R_1 - P_1 > 0$.

Therefore, we need only analyze the sign of $D_r'^*$. Based on the existence conditions for an internal equilibrium point, we have $-[\theta b + a + (1 + \theta)f]r + f(\theta d + c) - e(\theta b + a) > 0$. For the numerator, we have a zero at $r_{02} = \frac{ad - bc}{a - b}$. Referring to the proof of Theorem 4, we can derive the content 5.2 of Theorem 5.
\end{IEEEproof}

To facilitate the summary of our results, TABLE~\ref{table2} presents the existence and stability conditions for the system’s equilibrium points when the punishment is imposed under resource scarcity.

Below, we provide some representative numerical examples to illustrate the more significant points from the discussion above. For further verification, please refer to the Appendix.

\textbf{Example 3.3} Numerical Verification of Boundary Convergence in the Clause 4.3 of Theorem 4.
\begin{figure}[!t]
    \centering
    \includegraphics[width=0.5\textwidth]{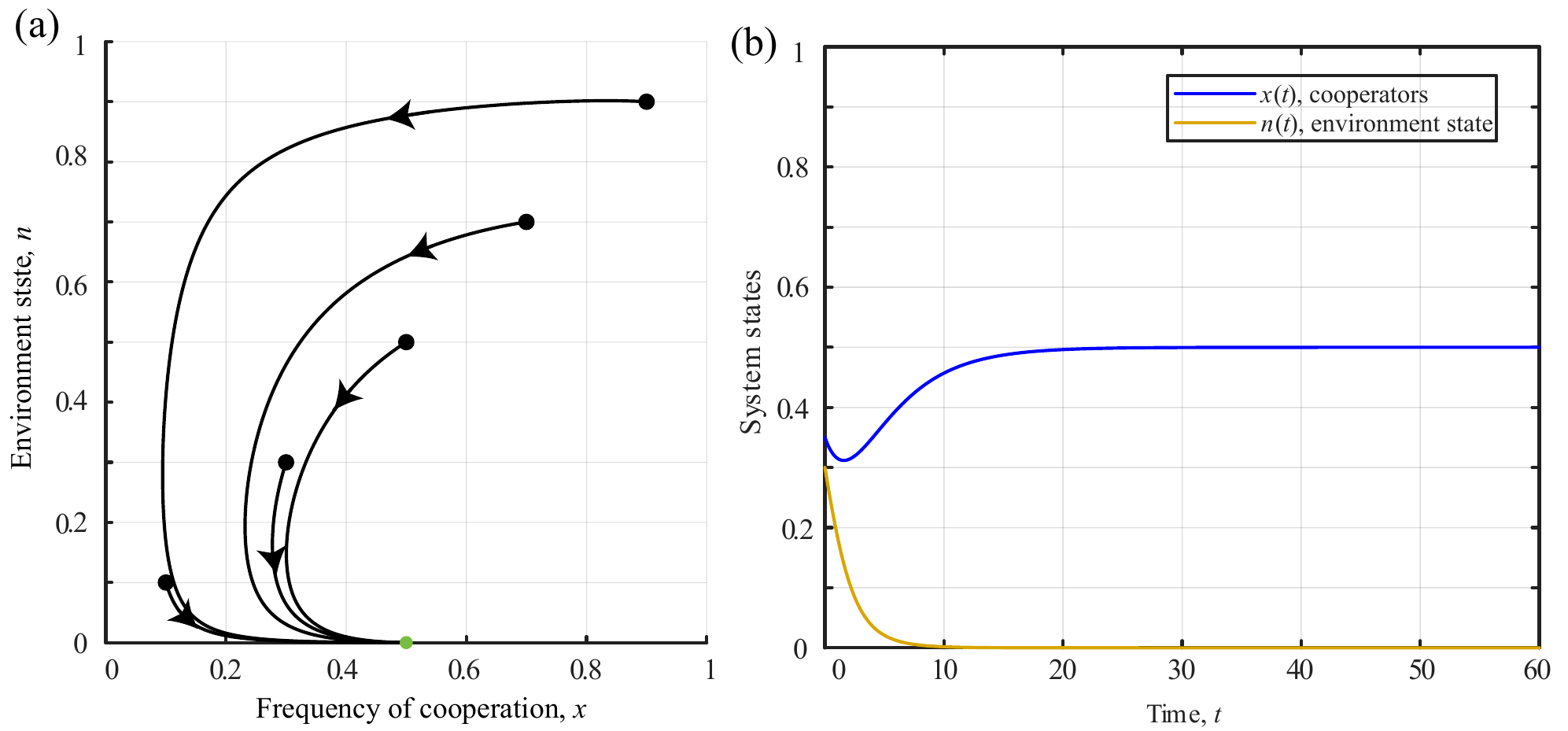}
\caption{Boundary convergence dynamics under penalty constraints in resource-constrained systems, subject to the conditions in clause 4.3 of Theorem 4. Panel (a) shows the phase diagram, where $(x_{n = 0}, 0)$ is a stable point. Panel (b) shows the time evolution of the system state with initial conditions $x = 0.35$ and $n = 0.3$. Parameters are set as $R_0 = 3$, $S_0 = 2$, $T_0 = 4.5$, $P_0 = 2.5$, $R_1 = 3.5$, $S_1 = 2$, $T_1 = 5$, $P_1 = 3$, $\varepsilon = 1$, $\theta = 0.2$, $r = 1$.}
\label{Figure3}
\end{figure}

Figure~\ref{Figure3}(a) shows the phase diagram of the system. The four corner equilibrium points all exist, and there is also an equilibrium point $(x_{n = 0}, 0)$ on the boundary. Among these, $(x_{n = 0}, 0)$ is stable. No internal equilibrium points exist within this parameter interval. Fig.~\ref{Figure3}(b) shows the time evolution with initial conditions $x = 0.35$ and $n = 0.3$. The trajectory converges to $(x_{n = 0}, 0)$, verifying the stability of this boundary equilibrium point.
This example demonstrates that when resources are scarce, if the temptation to defect exceeds the risk of exploitation ($c > d$), individuals are more inclined to actively defect. When the punishment intensity $r$ just exceeds the risk of exploitation $d$ ($r > d$), the system converges to a state of partial cooperation but complete resource depletion. This corresponds to a ``closing the barn door after the horse has bolted” type of intervention: while small penalties can prevent total collapse, they cannot restore the environment.

\textbf{Example 3.4} Numerical Verification of the Stability of the Internal Equilibrium Point in the second conclusion of clause 4.4 in Theorem 4.
\begin{figure}[!t]
    \centering
    \includegraphics[width=0.5\textwidth]{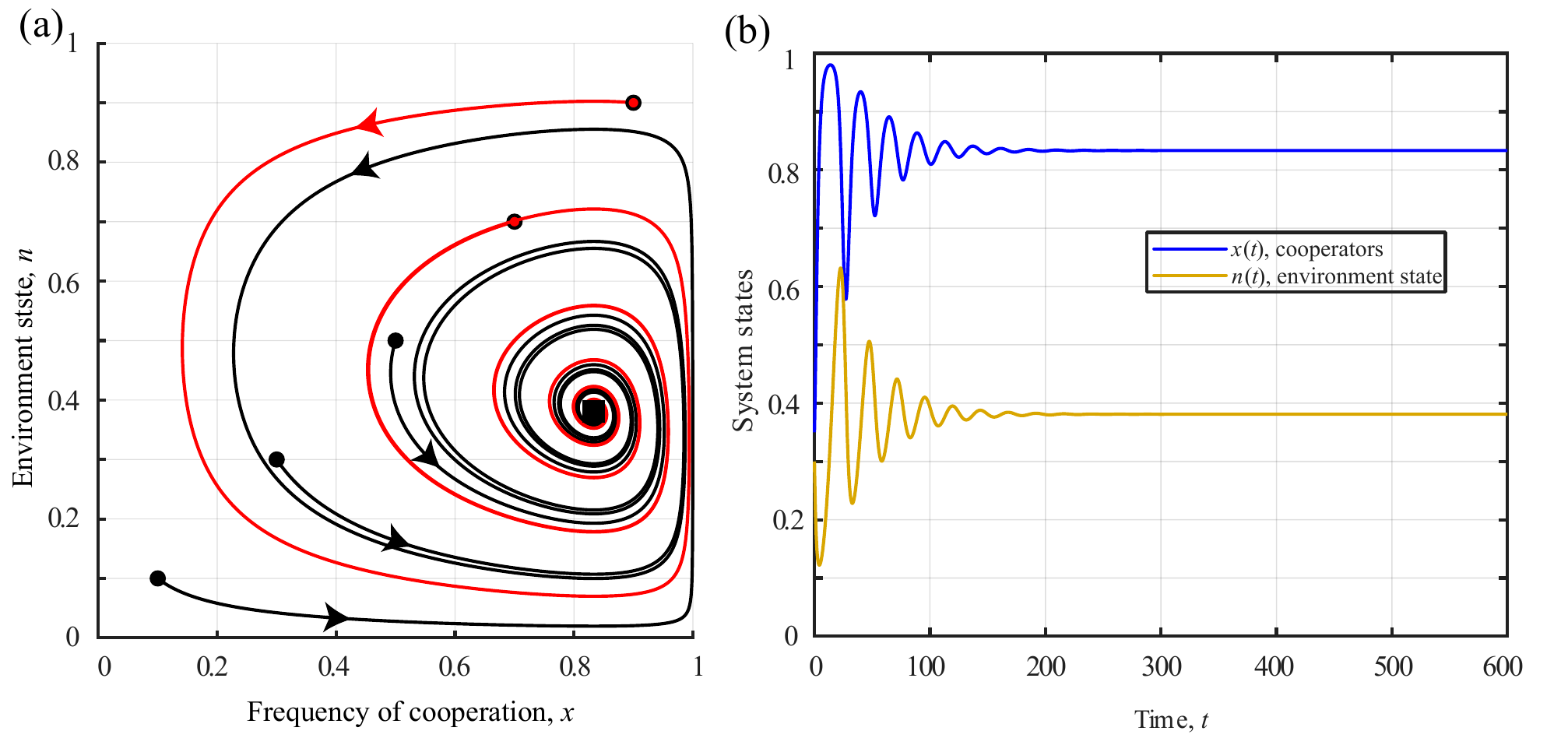}
\caption{Validating the internal stability under the second conclusion of clause 4.4 in Theorem 4 for systems characterized by penalty constraints and resource scarcity. Panel (a) shows the phase diagram, where $(x^*,n^*)$ is stable. Panel (b) shows the time evolution of the system state with initial conditions $x = 0.35$ and $n = 0.3$. Parameters are set as $R_0 = 3$, $S_0 = 2$, $T_0 = 4.5$, $P_0 = 2.5$, $R_1 = 3.5$, $S_1 = 1.5$, $T_1 = 4.5$, $P_1 = 3$, $\varepsilon = 1$, $\theta = 0.2$, $r = 2$.}
\label{Figure4}
\end{figure}

Figure~\ref{Figure4}(a) shows the phase diagram of the system. The internal equilibrium point $(x^*, n^*)$ is stable, and all trajectories converge to this point. Fig.~\ref{Figure4}(b) illustrates the time evolution starting from initial conditions $x = 0.35$ and $n = 0.3$, which converges to the internal equilibrium point, thereby verifying the theoretical prediction of the second conclusion of clause 4.4 in Theorem 4.

This example shows that if, during resource scarcity, the temptation to defect exceeds the risk of being exploited ($c > d$), and when resources improve, individuals fear exploitation more ($a < b$), then when the punishment intensity $r$ exceeds $a$ certain critical value, the system converges to a stable state where both cooperation and resource levels are moderate. This indicates that when individuals have a stronger defensive motivation under favorable conditions, punishment during resource scarcity is more likely to maintain system stability, even though it does not converge to the ideal state of full cooperation.
\begin{figure}[!t]
    \centering
    \includegraphics[width=0.5\textwidth]{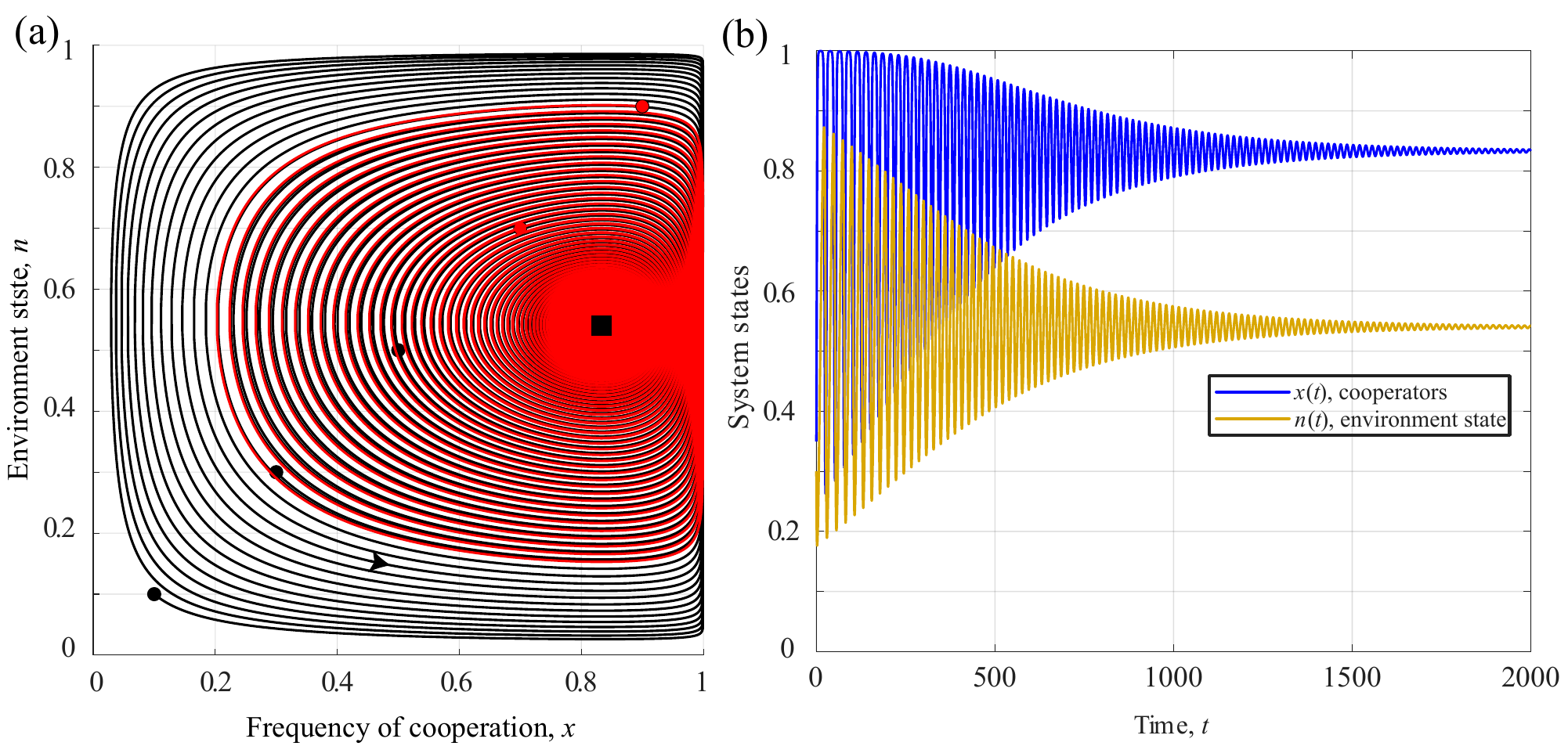}
\caption{Internal stable dynamics under resource scarcity with penalty when parameters meet the third condition of clause 4.4 in Theorem 4.
Panel (a) shows the phase diagram, where $(x^*,n^*)$ is stable. Panel (b)  shows the time evolution of the system state with initial conditions $x = 0.35$ and $n = 0.3$.  Parameters are set as $R_0 = 3.5$, $S_0 = 1.5$, $T_0 = 4.5$, $P_0 = 3$, $R_1 = 3$, $S_1 = 1.5$, $T_1 = 4.5$, $P_1 = 2.5$, $\varepsilon = 1$, $\theta = 0.2$, $r = 2.75$.}
\label{Figure5}
\end{figure}

As shown in Fig.~\ref{Figure5}, we further substantiate Theorem 4.4 through numerical analysis. The results indicate a critical dynamic: when risk-averse motives dominate during scarcity ($c < d$) and opportunistic defection prevails in abundance ($a > b$), the coupled system can still achieve stability. Specifically, provided that the defensive motivation in adverse conditions is strong enough, the implementation of stringent punitive measures serves as a stabilizer, effectively neutralizing the high temptation to defect in resource-rich scenarios."

\textbf{Example 3.5} Validating the Clause 5.1 of Theorem 5 for Periodic Closed Orbits within Hamiltonian Dynamics via Numerical Simulation.
\begin{figure}[!t]
    \centering
    \includegraphics[width=0.5\textwidth]{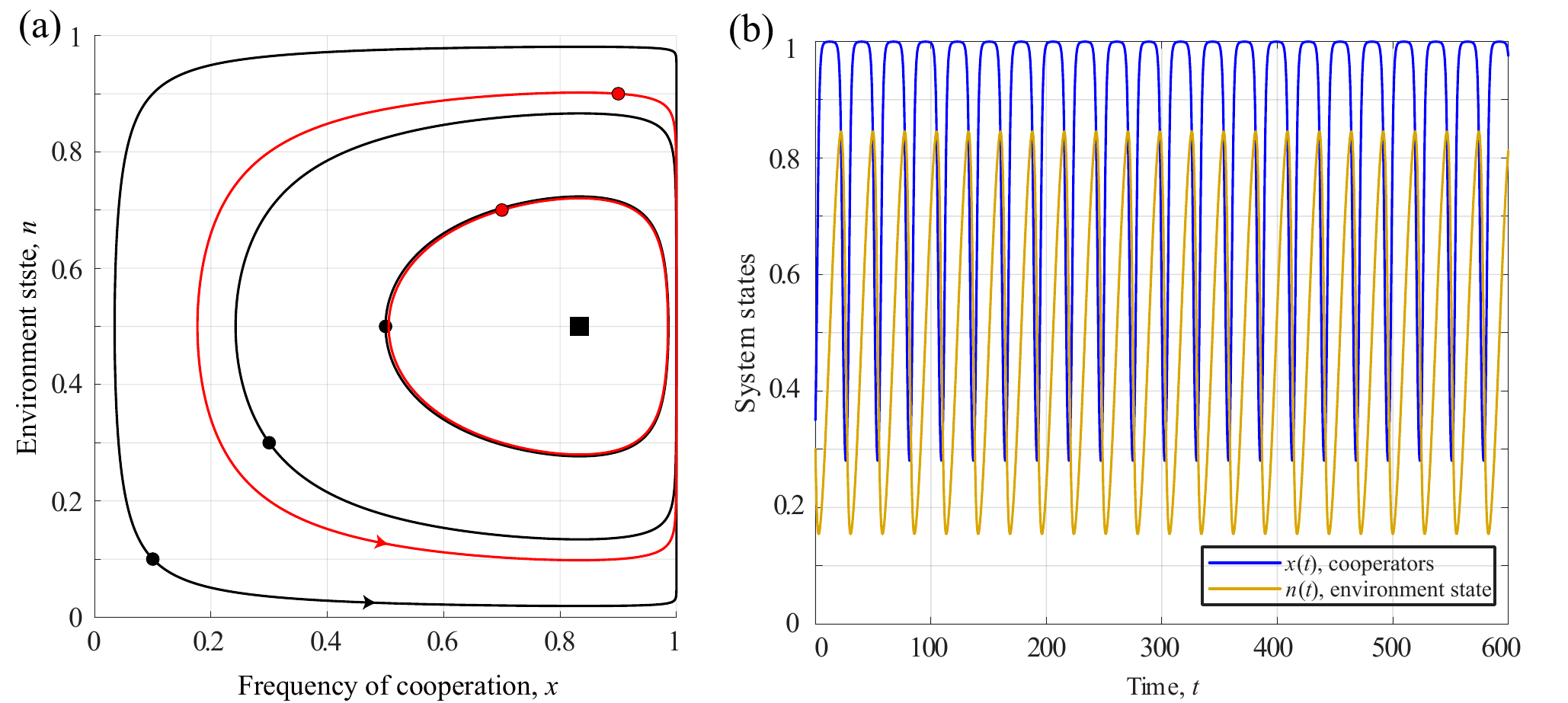}
\caption{Numerical evidence for central cycles in Hamiltonian systems as stated in the clause 5.1 of Theorem 5.
Panel (a) shows the phase diagram, in which all equilibrium points are unstable, and the evolution trajectory is periodic closed orbits around an internal equilibrium point. Panel (b) shows the time evolution of the system states with initial conditions x = 0.35 and n = 0.3. The parameters are set as $R_0 = 3.5$, $S_0 = 1.5$, $T_0 = 4.5$, $P_0 = 3$, $R_1 = 3$, $S_1 = 1.5$, $T_1 = 4.5$, $P_1 = 2.5$, $\varepsilon = 1$, $\theta = 0.2$, $r = 2.5$.}
\label{Figure6}
\end{figure}

Figure~\ref{Figure6}(a) shows the phase diagram near the bifurcation point. The system exhibits periodic closed orbits centered on an internal equilibrium point. Fig.~\ref{Figure6}(b) illustrates the time evolution for the initial conditions $x = 0.35$ and $n = 0.3$, where the cooperation frequency and environmental state exhibit sustained periodic oscillations.

This example demonstrates that when individuals tend to choose different strategies under favorable and unfavorable conditions, and a specific intensity of punishment is applied under adverse conditions, the system neither stabilizes at a single point nor collapses, but instead enters a predictable cyclical pattern. In practice, this requires policymakers to control the intensity of punishment with extreme precision, which is quite challenging to implement.

\textbf{Example 3.6} Numerical Illustration of Heteroclinic cycle to Substantiate Clause 5.2 of Theorem 5.

Figure~\ref{Figure7}(a) displays the phase diagram of the system, where all equilibrium points are unstable, and the heteroclinic cycle connects various saddle points to form a closed circuit. Fig.~\ref{Figure7}(c) shows the time evolution with initial conditions of $x = 0.35$ and $n = 0.3$, exhibiting irregular oscillations. Fig.~\ref{Figure7}(b) and Fig.~\ref{Figure7}(d) further illustrate the evolution of the penalty intensity, verifying that the system is in a heteroclinic loop state.

For numerical simulations of the system dynamics under other parameter settings, please refer to the Appendix. In light of the above considerations, the practical significance of implementing punishment mechanisms within resource-scarce environments can be analyzed as follows:
\begin{figure}[!t]
    \centering
    \includegraphics[width=0.5\textwidth]{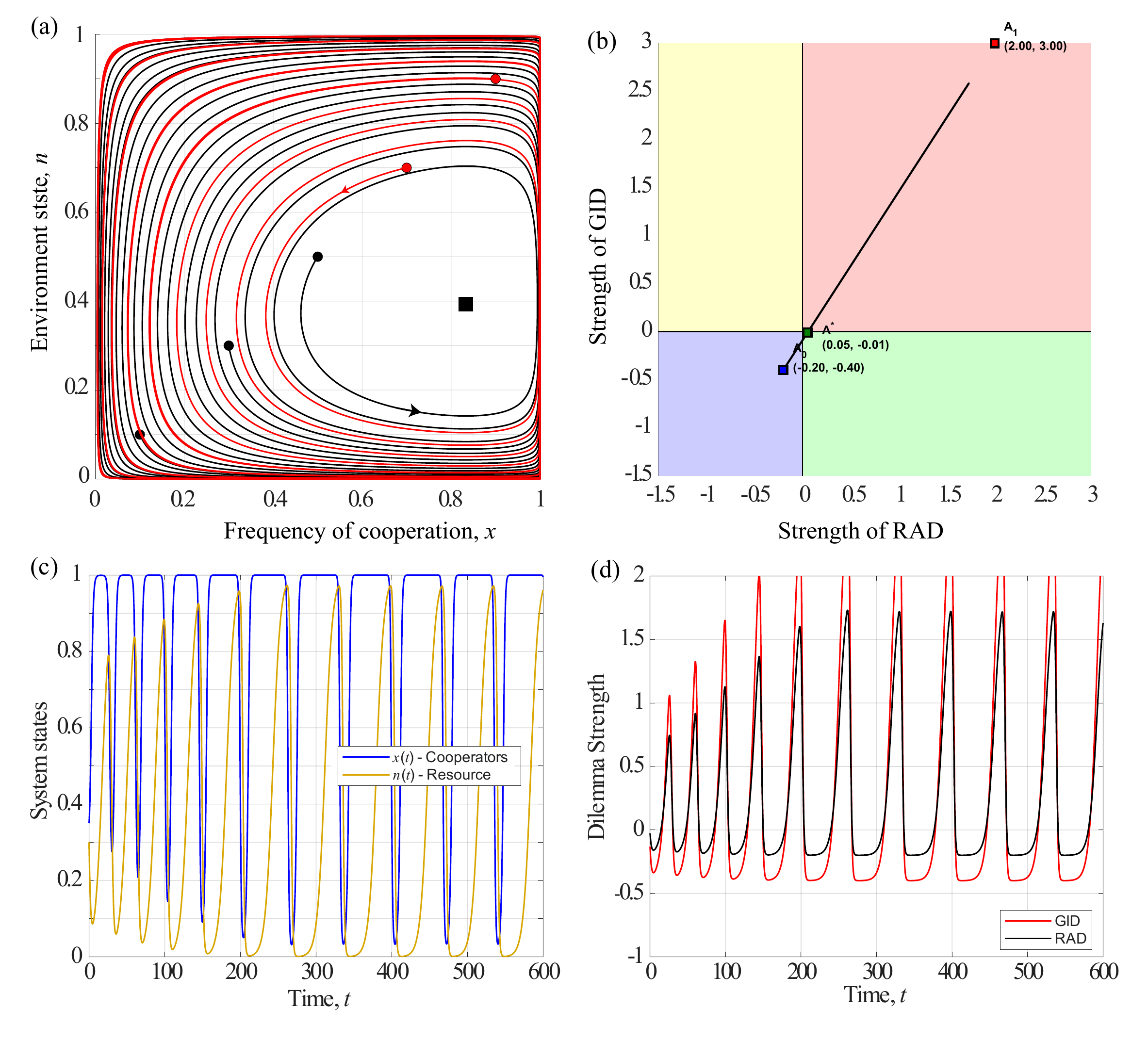}
\caption{Dynamical behavior of a heteroclinic cycle under resource scarcity with penalty satisfying clause 5.2 of Theorem 5. Panel (a) shows the phase diagram, in which all points are unstable, and the evolution trajectory forms an expanding heterocyclic cycle. Panel (c) shows the time evolution of the system state with initial conditions $x = 0.35$ and $n = 0.3$. Panel (b) shows the GID-RAD dilemma phase diagram. Panel (d) shows the time evolution of GID and RID. Parameters are set as $R_0 = 3.5$, $S_0 = 1.5$, $T_0 = 4.5$, $P_0 = 3$, $R_1 = 3$, $S_1 = 1.5$, $T_1 = 4.5$, $P_1 = 2.5$, $\varepsilon = 1$, $\theta = 0.2$, $r = 2$.}
\label{Figure7}
\end{figure}
\begin{table*}[!t]
\centering
\caption{Stability of the equilibrium point when penalties are imposed under resource scarcity}
\label{tab:scarce_combined}
\begin{tabular}{c c c}
\hline
Equilibrium points & Existence condition & Stability condition \\
\hline
$(0,0)$ & Always exist & Locally stable for $0<r<d$; saddle for $r>d$ \\
$(0,1)$ & Always exist & Unstable saddle \\
$(1,0)$ & Always exist & Unstable focus for $0<r<c$; saddle for $r>c$ \\
$(1,1)$ & Always exist & Unstable saddle \\
$(x_{n=0},0)$ & \begin{tabular}{c}
Exists when $c>d$ and $d<r<c$, \\
or $c<d$ and $c<r<d$
\end{tabular} & \begin{tabular}{c}
Case $c>d$: locally stable for $d<r<r_{c2}$, saddle for $r_{c2}<r<c$; \\
Case $c<d$: unstable focus for $c<r<r_{c2}$, saddle for $r_{c2}<r<d$; \\
\end{tabular} \\
$(x^{*},n^{*})$ & Exists when $r > r_{c2}$& \begin{tabular}{c}
Case $c>d$, $a>b$: locally stable for $r>r_{c2}$; \\
Case $c>d$, $a<b$: locally stable for $r_{c2}<r<r_{02}$, unstable focus for $r>r_{02}$; \\
Case $c<d$, $a>b$: unstable focus for $r_{c2}<r<r_{02}$, locally stable for $r>r_{02}$; \\
Case $c<d$, $a<b$: unstable focus for $r>r_{c2}$
\end{tabular} \\
\hline
\end{tabular}
\label{table2}
\end{table*}

When the temptation to defect surpasses the risk of exploitation across both environmental states ($c > d$ and $a > b$), a pervasive incentive for proactive defection emerges, severely undermining the foundation for cooperation. Under these circumstances, effective intervention necessitates a high punishment threshold ($r > r_{c2}$). If the punishment intensity is only moderate ($d < r < r_{c2}$), the system merely converges to a state of partial depletion; while total collapse is averted, the environment remains unable to achieve full recovery.

Conversely, if the temptation to defect dominates in degraded environments ($c > d$) while risk-aversion prevails in abundant ones ($a < b$), an 'optimal window' for punishment intensity ($r_{c2} < r < r_{02}$) exists, facilitating convergence to an internal equilibrium. However, this stability requires precise calibration: insufficient punishment precipitates a relapse into the Tragedy of the Commons, whereas excessive intensity induces oscillations within a heteroclinic cycle.

When individuals exhibit strong defensive motivations under resource scarcity ($c < d$) while facing higher temptations to defect in abundant environments ($a > b$), achieving system stability necessitates a high punishment intensity ($r > r_{02}$). In the absence of such stringent measures, the system is destined to collapse into either the Tragedy of the Commons or an unstable heteroclinic cycle.

Conversely, if a defensive mindset prevails regardless of the environmental state ($c < d$ and $a < b$), a robust foundation for collective cooperation is established. However, even in these inherently resilient societies, punishment alone may prove insufficient to extricate the system from systemic traps or oscillatory cycles. In such contexts, the imposition of external punishment should be discouraged; instead, alternative governance strategies, such as reputation-based mechanisms, should be prioritized to sustain long-term stability.

\section{Discussion}
In the context of public resource governance, punitive incentives have been proven to be an effective means of promoting cooperation and preventing the tragedy of the commons \cite{shijiagoverning,chen2014probabilistic,wang2025optimally}. However, previous research has largely focused on the intensity and form of incentives, while neglecting the question of under what conditions such incentives are most effective, namely, whether intervention should occur when conditions are still favorable or whether corrective measures should be taken after resources have already been degraded.

Here, within the framework of feedback-evolving games, we have constructed two coupled game models applicable to penalties imposed during resource abundance and penalties imposed during resource scarcity. We seek to identify pathways to cooperation within the classic and most challenging Prisoner’s Dilemma framework, and investigate under which conditions the application of penalties yields more significant effects. Using linear stability theory and classical numerical verification, we have systematically analyzed the equilibrium structure, stability conditions, and global dynamical behavior of the system under different timing of incentives. By employing analytical tools such as the dilemma phase space, we reveal how the intensity of incentives, game structure parameters, and environmental evolution parameters collectively shape the system’s long-term evolutionary trajectory, and we comprehensively classify and compare the differences in dynamical behavior under different timing of incentives.

Based on the research objectives, we have found that when punishment is introduced under conditions of resource abundance, the system has the potential to converge to an ideal state of full cooperation and resource abundance, although this possibility is typically accompanied by bistable system dynamics and requires high thresholds for both cooperation levels and environmental resource conditions. This finding suggests that imposing punishment during periods of resource abundance has the potential to shift the role of punitive incentives from treating symptoms to addressing root causes, thereby demonstrating the theoretical feasibility of preventive intervention strategies. Conversely, imposing punishment during periods of resource scarcity, while capable of rescuing the system from the nightmare of the tragedy of the commons, leads to complex and diverse dynamical behavior, potentially giving rise to phenomena such as centers and heterocyclic cycle. This creates significant challenges for policymakers in determining the appropriate intensity of penalties. Although we may be fortunate enough to avoid such problems and allow the system to reach a stable state, this stability is often accompanied by low resource levels, and the system’s final state is far from ideal. These findings provide clear guidance on the timing of punishment design in the governance of public resources: punishment should be applied when environmental conditions are still favorable to prevent problems before they arise, rather than waiting until resources are depleted to close the barn door after the horse has bolted.

In this study, we assume that the penalty intensity $r$ is a fixed constant, this assumption allows us to focus on the core question of when to apply incentives. Building on this foundation, future research could make the penalty intensity dynamic, allowing it to adaptively adjust based on the environmental state $n$ and the level of cooperation $x$, thereby constructing environment-dependent or cooperation-dependent dynamic punishment mechanisms \cite{hua2024coevolutionary}. Exploring the timing of application within this framework holds greater practical significance. Furthermore, we assume that the punishment mechanism takes effect immediately, whereas in the real world, various delays may exist \cite{zheng2021modeling,zhao2016matrix,hu2023evolutionary}. On the one hand, there may be a time lag between the formulation of punishment systems and the implementation of policies. On the other hand, there may also be a time lag between the implementation of systems and the achievement of governance outcomes. Analyzing the effects of these time lags helps us better approximate real-world governance scenarios and provides policymakers with more actionable theoretical guidance.

\appendix
\section*{Detailed numerical examples}
In this appendix, we provide all the numerical examples used in this study.
\subsection*{Numerical Example of Punishment Under Resource Abundance}

We have already provided numerical examples of how to escape the tragedy of the commons by imposing penalties in an environment of abundant resources. The following two examples demonstrate that when the penalty is too weak, the system cannot escape the tragedy of the commons.
\begin{figure}
    \centering
    \includegraphics[width=0.5\textwidth]{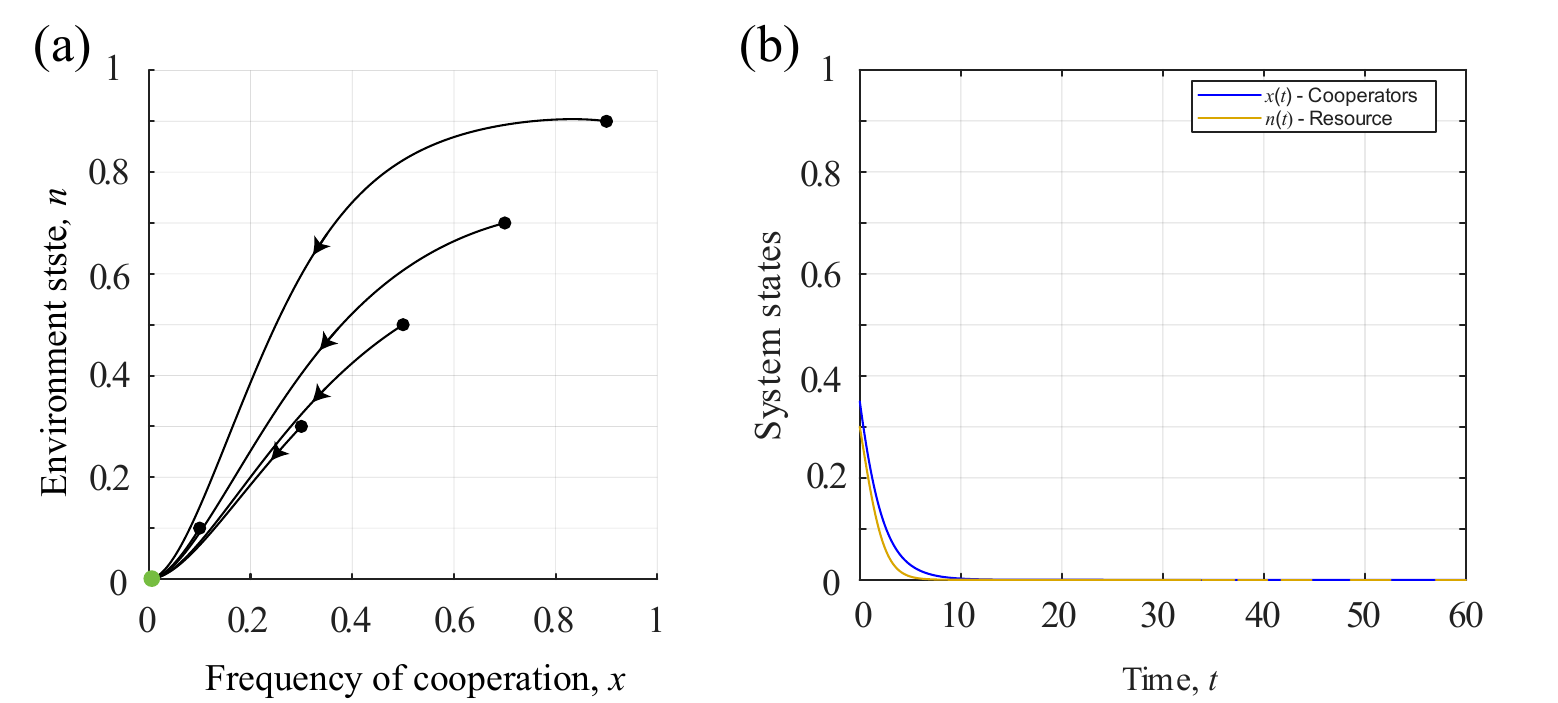}
\caption{The tragedy of the commons with punishment under abundant resources.
Panel (a) shows the phase diagram, where $(0,0)$ is a stable point. Panel (b) shows the time evolution of the system state with initial conditions $x = 0.35$ and $n = 0.3$. The trajectory converges to $(0,0)$. Parameters are set as $R_0 = 3$, $S_0 = 2$, $T_0 = 4.5$, $P_0 = 2.5$, $R_1 = 3.5$, $S_1 = 2$, $T_1 = 5$, $P_1 = 3$, $\varepsilon = 1$, $\theta = 0.2$, $r = 1$.}
\label{Figure8}
\end{figure}

\begin{figure}
    \centering
    \includegraphics[width=0.5\textwidth]{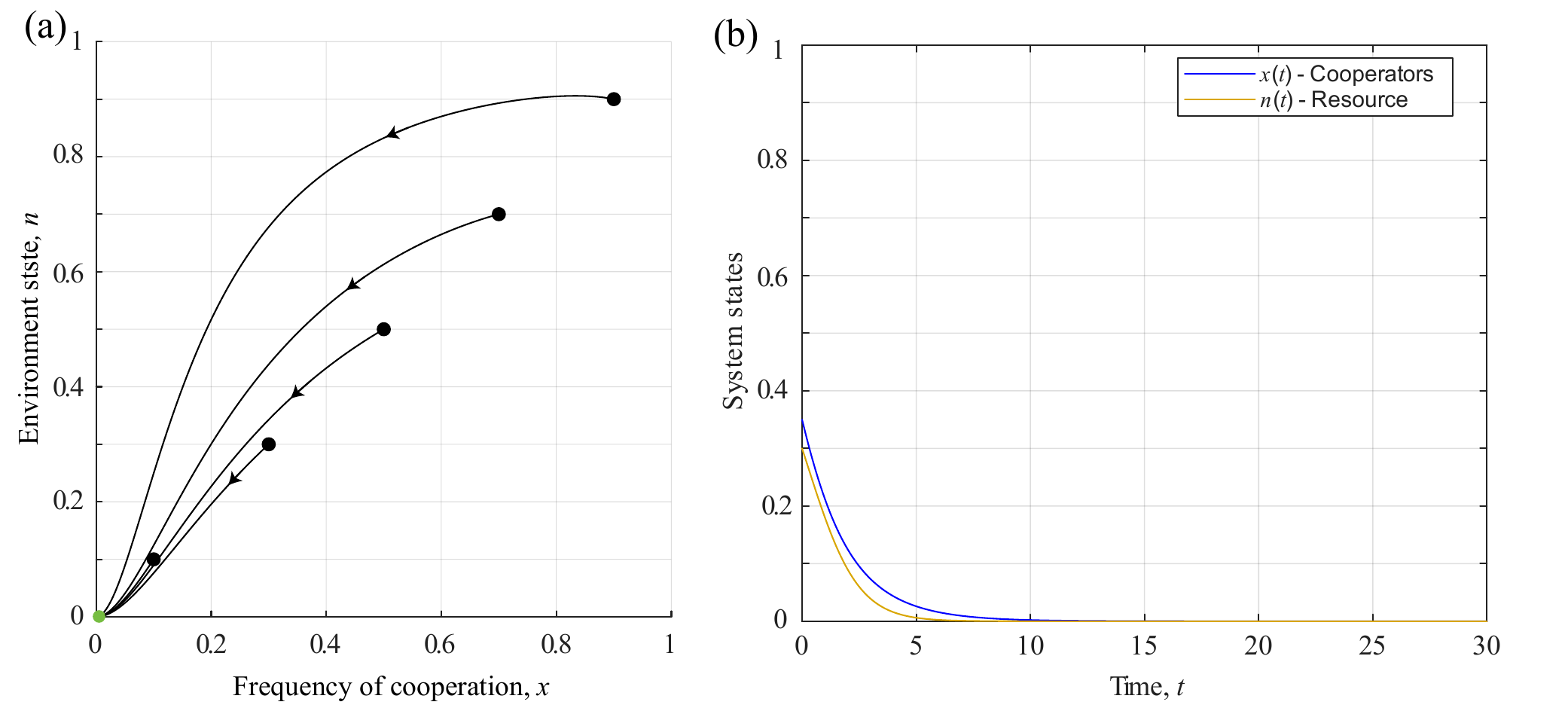}
\caption{The tragedy of the commons with punishment under abundant resources.
Panel (a) shows the phase diagram, where $(0,0)$ is a stable point. Panel (b) shows the time evolution of the system state with initial conditions $x = 0.35$ and $n = 0.3$; the trajectory converges to $(0,0)$. Parameters are set as $R_0 = 3$, $S_0 = 2$, $T_0 = 4.5$, $P_0 = 2.5$, $R_1 = 3.5$, $S_1 = 2$, $T_1 = 4.5$, $P_1 = 3.25$, $\varepsilon = 1$, $\theta = 0.2$, $r = 0.75$.}
\label{1.2}
\end{figure}

Under conditions of environmental abundance, if the punitive fine remains below $T_1-R_1$, the payoff advantage of choosing defection over cooperation against a cooperative counterpart, the corner equilibrium $(0,0)$ attains stability, consistent with Clause 2.2 of Theorem 2. As shown in Figs.~\ref{Figure8} and ~\ref{1.2}, the coupled system possesses a unique stable equilibrium at $(0,0)$, which signifies a scenario where all individuals adopt defecting behavior within a resource-depleted environment.

\subsection*{Numerical Example of Punishment Under Resource Scarcity}

As mentioned earlier, the dynamics of punishment under resource scarcity are quite complex. Here we present the remaining numerical examples below, categorized by scenario.

\textbf{Case 1: The tragedy of the commons}

As demonstrated in Fig.~\ref{Figure10}, under certain conditions of resource scarcity, the punishment intensity $r$ may prove insufficient to offset the temptation to defect. The phase portrait in Fig.~\ref{Figure10}(a) and the corresponding time series in Fig.~\ref{Figure10}(b) reveal that the corner equilibrium $(0,0)$ acts as the unique stable attractor. In this scenario, regardless of the initial level of cooperation or environmental state, the system inevitably gravitates toward total resource depletion and universal defection. 

\begin{figure}
    \centering
    \includegraphics[width=0.5\textwidth]{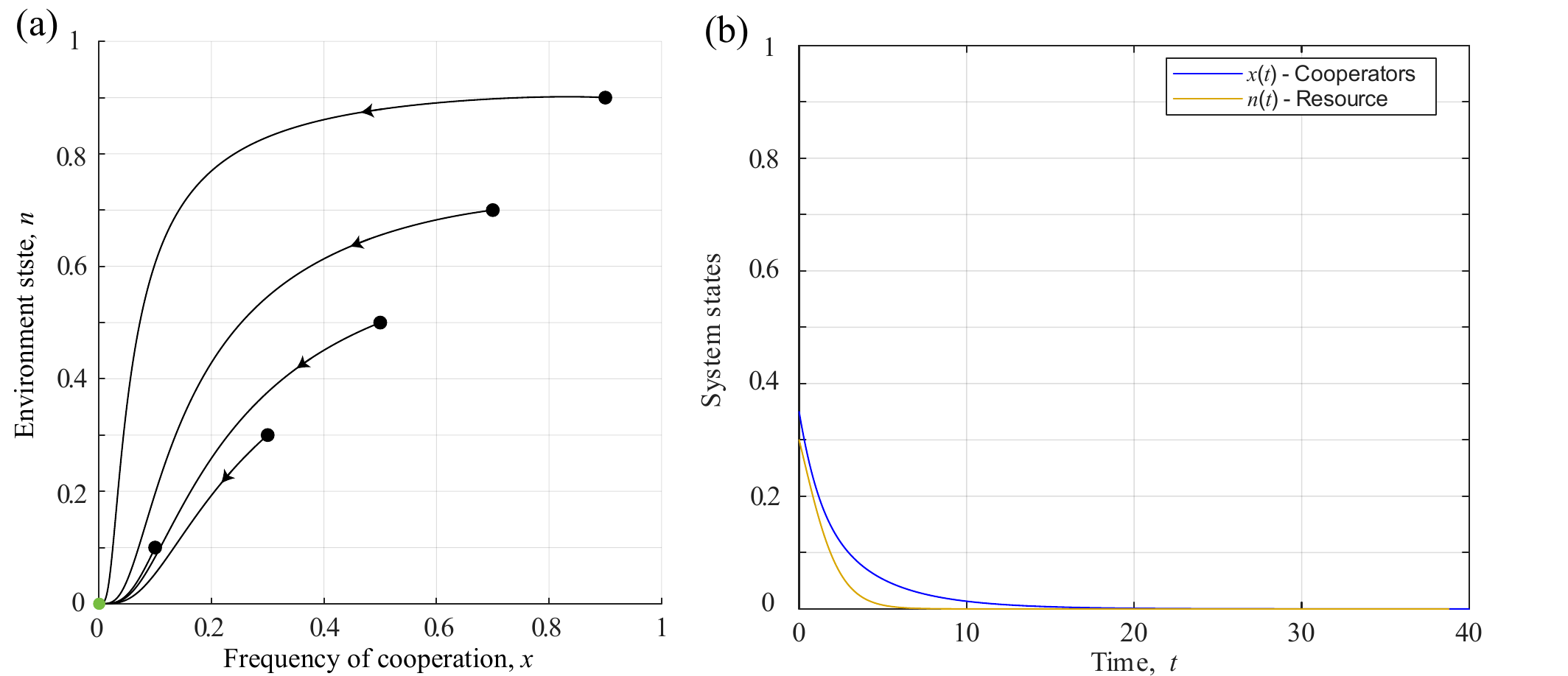}
\caption{The tragedy of the commons with punishment under resource scarcity.
Panel (a) shows the phase diagram, where $(0,0)$ is a stable point. Panel (b) shows the time evolution of the system state with initial conditions $x = 0.35$ and $n = 0.3$. The trajectory converges to $(0,0)$. Parameters are set as $R_0 = 3$, $S_0 = 2$, $T_0 = 4.5$, $P_0 = 2.5$, $R_1 = 3.5$, $S_1 = 2$, $T_1 = 5$, $P_1 = 3$, $\varepsilon = 1$, $\theta = 0.2$, $r = 0.25$.}
\label{Figure10}
\end{figure}

\textbf{Case 2: Periodic closed orbit}

While the previous analysis focused on the scenario $c < d$ and $a > b$, we now examine the alternative case as illustrated in Fig. \ref{Figure6}. Specifically, we consider the regime where $c > d$ and $a < b$.

The numerical analysis presented in Fig. \ref{Figure11} confirms a distinct dynamical regime characterized by central periodic behavior, as analytically predicted in the clause 5.1 of Theorem 5. As illustrated in Fig. \ref{Figure11}(a), the phase portrait reveals that the interior equilibrium point is surrounded by a family of periodic closed orbits. Unlike the convergence to boundary or corner equilibria observed in previous cases, the trajectories here circulate indefinitely. The corresponding time series in Fig. \ref{Figure11}(b) further substantiates this dynamic. Both the frequency of cooperation $x(t)$ and the environmental state $n(t)$ exhibit synchronized, sustained oscillations. When environmental resources increase, the incentive for cooperation eventually wanes, leading to a decline in resources. Subsequently, the resulting scarcity triggers the punishment mechanism and the defensive motivations of individuals, which restores cooperation and initiates a new cycle of environmental recovery.

\begin{figure}
    \centering
    \includegraphics[width=0.5\textwidth]{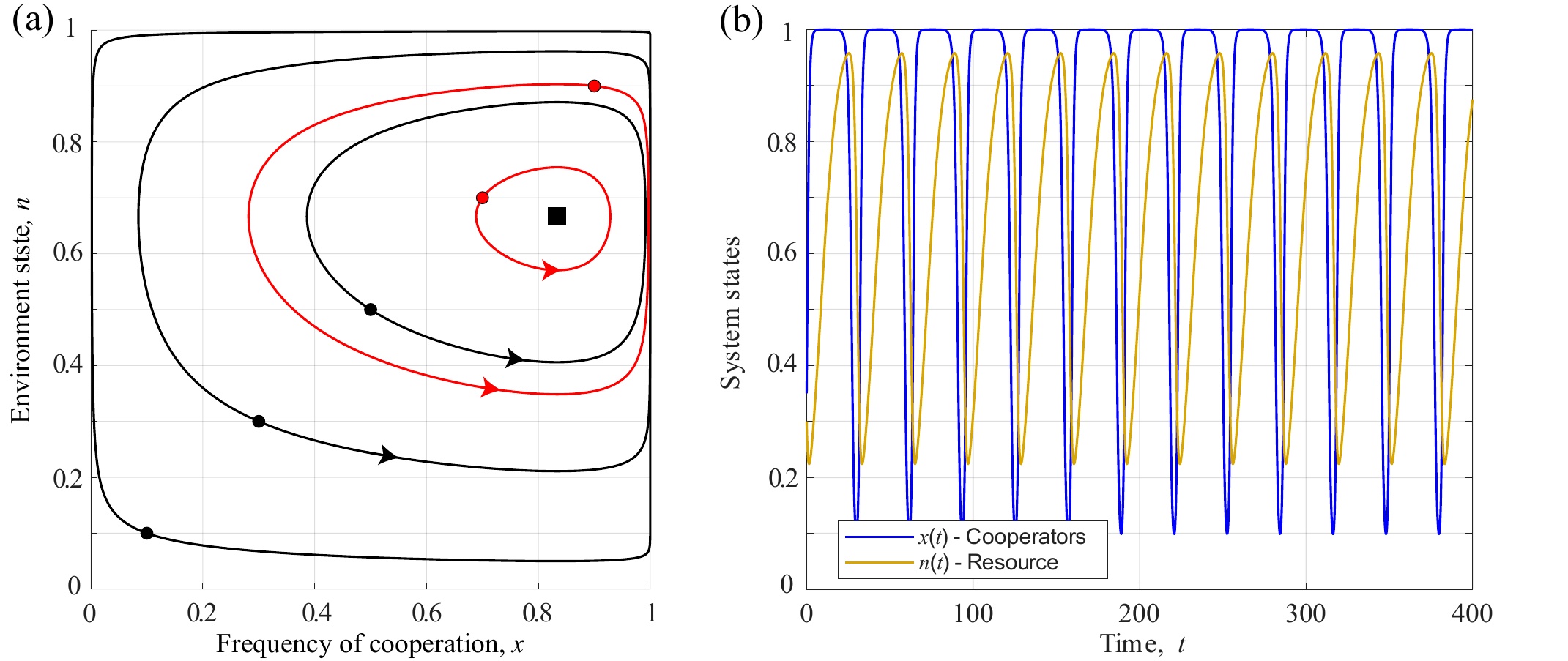}
\caption{Central periodic behavior under resource scarcity with penalty.
Panel (a) The figure shows the phase diagram, where all points are unstable, and the evolution trajectory is a stable limit cycle around the internal equilibrium point. Panel (b) shows the time evolution of the system state with initial conditions $x = 0.35$ and $n = 0.3$. The parameters are set as $R_0 = 3$, $S_0 = 2$, $T_0 = 4.5$, $P_0 = 2.5$, $R_1 = 3.5$, $S_1 = 1.5$, $T_1 = 4.5$, $P_1 = 3$, $\varepsilon = 1$, $\theta = 0.2$, $r = 3.5$.}
\label{Figure11}
\end{figure}

\textbf{Case 3: Heteroclinic cycle}

In Fig.\ref{Figure7}, we illustrated the emergence of heteroclinic cycle dynamics under the conditions $c < d$ and $a > b$, with the punishment intensity falling within the range $d < r < r_{02}$. To further generalize these findings, we now present the heteroclinic cycle oscillations under two additional parameter regimes, as specified in the first and third clauses of 5.2 in Theorem 5.
\begin{figure}
    \centering
    \includegraphics[width=0.5\textwidth]{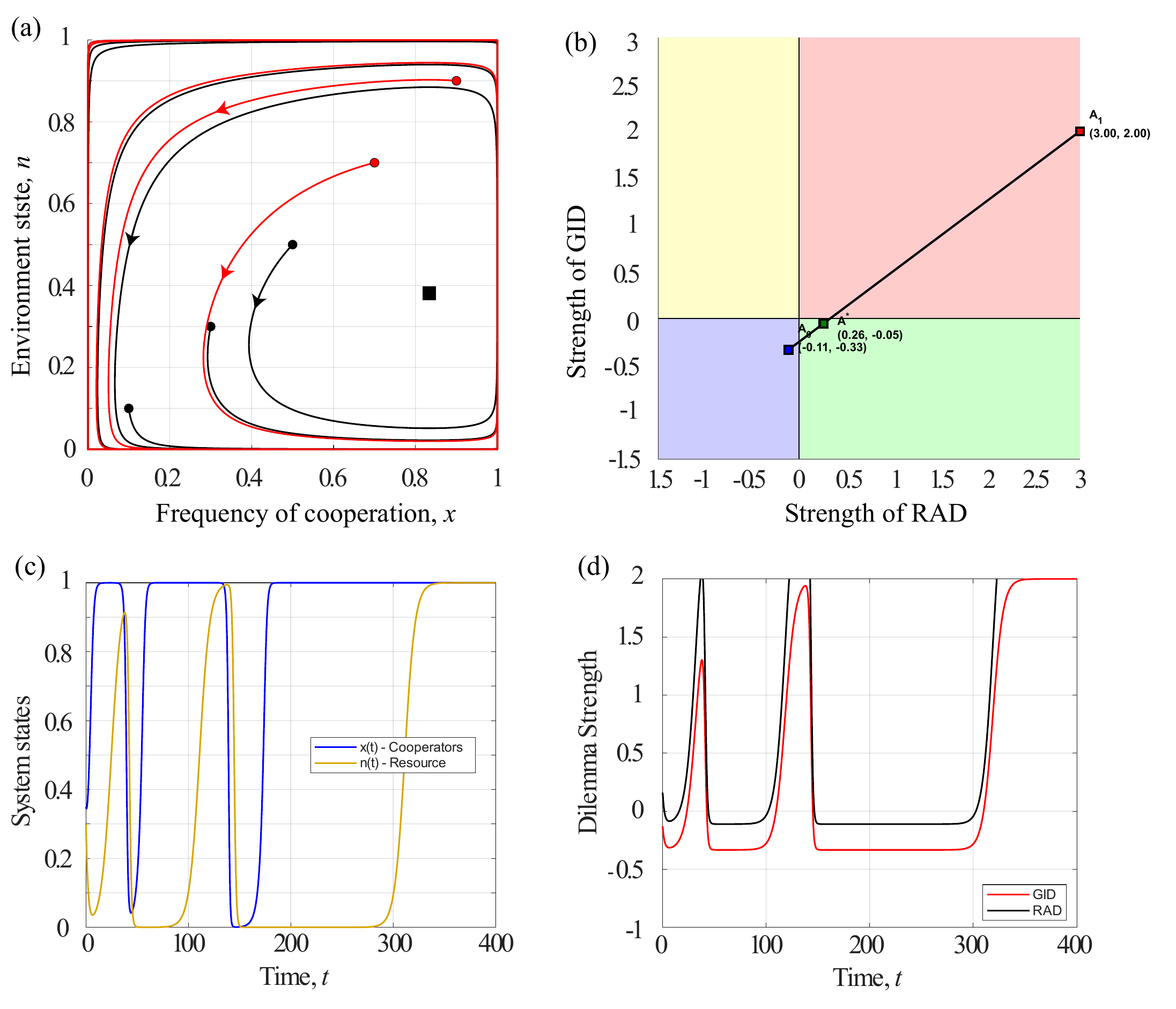}
\caption{Dynamical behavior of a heteroclinic cycle under resource scarcity with puinshment. Panel (a) shows the phase diagram, in which all points are unstable, and the evolution trajectory forms an expanding heterocyclic cycle. Panel (c) shows the time evolution of the system states with initial conditions $x = 0.35$ and $n = 0.3$. Panel (b) shows the GID-RAD dilemma phase diagram. Panel (d) shows the time evolution of GID and RID. Parameters are set as $R_0 = 3.5$, $S_0 = 1.5$, $T_0 = 4.5$, $P_0 = 3$, $R_1 = 3.5$, $S_1 = 1.5$, $T_1 = 4.5$, $P_1 = 3$, $\varepsilon = 1$, $\theta = 0.2$, $r = 1.75$.}
\label{Figure12}
\end{figure}
As illustrated in Fig.\ref{Figure12}, when $a < b$, $c < d$ and the punishment intensity is relatively high ($r > d$), the coupled system exhibits an expanding heteroclinic cycle. This dynamical regime validates the theoretical predictions in the third clause of 5.2 in Theorem 5. The phase portrait in Fig.\ref{Figure12}(a) shows that the evolutionary trajectories do not converge to a stable internal or boundary equilibrium. Instead, they form a heteroclinic cycle that connects the four corner equilibria of the system. As time progresses, the trajectories spiral outward, approaching the boundaries ever more closely. The time-series plot in Fig.\ref{Figure12}(c) further characterizes this behavior. The results in Fig.\ref{Figure12}(b) and (d) confirm that the heteroclinic cycle is propelled by the dynamic coupling of game payoffs and environmental states. Specifically, periodic spikes in dilemma strengths drive the rapid transitions between cooperation and defection observed in the phase space.

\begin{figure}
    \centering
    \includegraphics[width=0.5\textwidth]{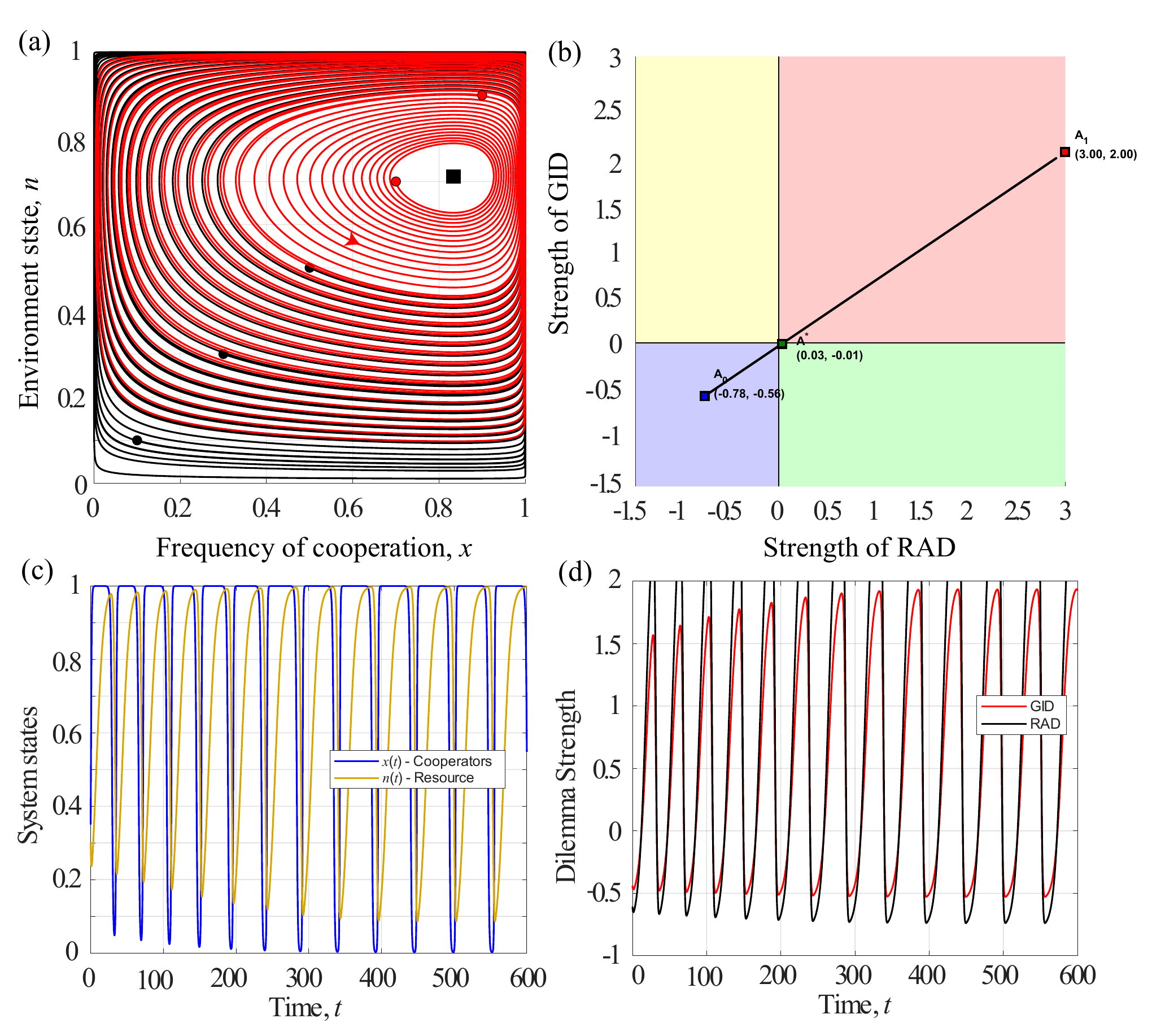}
\caption{Dynamical behavior of a heteroclinic cycle under resource scarcity with penalty. Panel (a) shows the phase diagram, in which all points are unstable, and the evolution trajectory forms an expanding heterocyclic loop. Panel (c) shows the time evolution of the system state with initial conditions x = 0.35 and n = 0.3. Panel (b) The figure shows the GID-RAD dilemma phase diagram. Panel (d) The figure shows the time evolution of GID and RID. Parameters are set as $R_0 = 3$, $S_0 = 2$, $T_0 = 4.5$, $P_0 = 2.5$, $R_1 = 3.5$, $S_1 = 1.5$, $T_1 = 4.5$, $P_1 = 3$, $\varepsilon = 1$, $\theta = 0.2$, $r = 4$.}
\label{Figure13}
\end{figure}
Next, we examine the dynamical regime where $c > d$, $a < b$, and $r > r_{02}$, as specified in the  first clause of 5.2 in Theorem 5. The numerical results presented in Fig.\ref{Figure13} confirm the emergence of an expanding heteroclinic cycle. As illustrated in Fig.\ref{Figure13}(a), the phase portrait reveals that even with a high punishment intensity ($r > r_{02}$), the system fails to reach a stable internal equilibrium. Instead, the evolutionary trajectories form an expanding heteroclinic cycle that sequentially approaches the four corner saddle points. This indicates a state of extreme instability where the system is perpetually driven toward the boundaries of the state space. The temporal evolution in Fig.\ref{Figure13}(c) further characterizes this behavior through pulse-like oscillations. Regarding the strategic drivers, Fig.\ref{Figure13}(b) and (d) illustrate that the heteroclinic cycle is propelled by the co-evolutionary feedback between environmental states and game payoffs.

\textbf{Case 4: Stable interior equilibrium}
\begin{figure}
    \centering
    \includegraphics[width=0.5\textwidth]{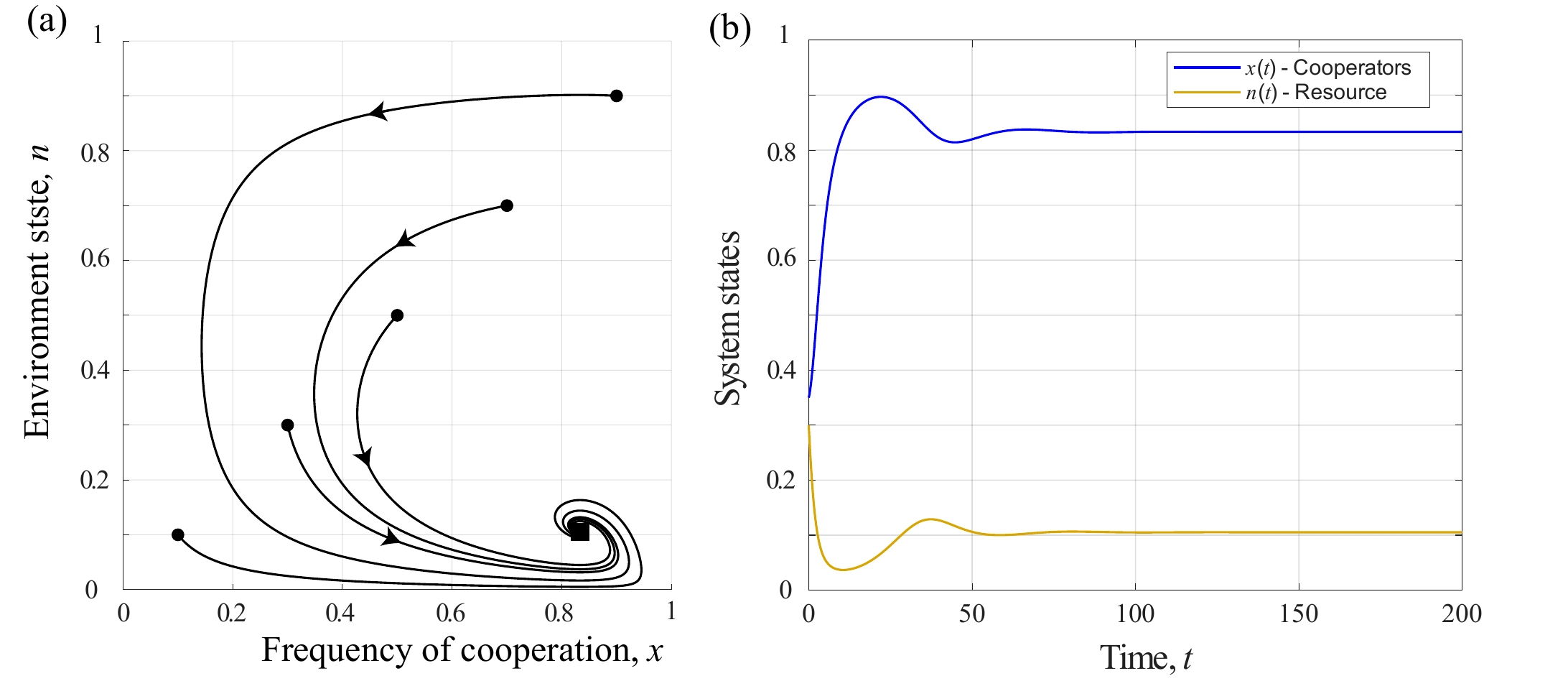}
\caption{The tragedy of the commons with punishment under resource scarcity.
Panel (a) shows the phase diagram, where $(0,0)$ is a stable point. Panel (b) shows the time evolution of the system state with initial conditions $x = 0.35$ and $n = 0.3$; the trajectory converges to $(0,0)$. Parameters are set as $R_0 = 3$, $S_0 = 2$, $T_0 = 4.5$, $P_0 = 2.5$, $R_1 = 3.5$, $S_1 = 2$, $T_1 = 5$, $P_1 = 3$, $\varepsilon = 1$, $\theta = 0.2$, $r = 1.5$.}
\label{Figure14}
\end{figure}

Following the analysis of the second and third conclusions in 4.4 of Theorem 4 (shown in Figs.\ref{Figure4} and \ref{Figure5}), we now provide a numerical illustration for the first conclusion of the same theorem. The results presented in Fig.\ref{Figure14} verify the emergence of an interior stable equilibrium point when $a < b$ and $c < d$ and the punishment intensity is sufficiently high ($r > r_{c2}$).

As shown in Fig.\ref{Figure14}(a), the phase portrait illustrates that the evolutionary trajectories, regardless of their initial configurations, eventually converge to a unique interior stable equilibrium point. The system reaches a state where human cooperation and environmental resources coexist at sustainable, non-zero levels. The temporal evolution depicted in Fig.\ref{Figure14}(b) further confirms this stability. After a brief period of dampened oscillations, both the frequency of cooperation $x(t)$ and the environmental state $n(t)$ stabilize at a constant steady state.


\bibliographystyle{IEEEtran}
\bibliography{IEEEabrv,myrefs}

@article{perc2008social,
  title={Social diversity and promotion of cooperation in the spatial prisoner’s dilemma game},
  author={Perc, Matja{\v{z}} and Szolnoki, Attila},
  journal={Physical Review E},
  volume={77},
  number={1},
  pages={011904},
  year={2008},
  publisher={APS}
}

@article{xu2025reinforcement,
  title={Reinforcement learning can be a double-edged sword for cooperation on higher-order networks},
  author={Xu, Yan and Zhao, Dawei and Benko, Tina Perc and Xia, Chengyi and Perc, Matja{\v{z}}},
  journal={IEEE Transactions on Systems, Man, and Cybernetics: Systems},
  volume={56},
  number={1},
  pages={231--244},
  year={2025},
  publisher={IEEE}
}

@article{perc2010coevolutionary,
  title={Coevolutionary games—a mini review},
  author={Perc, Matja{\v{z}} and Szolnoki, Attila},
  journal={BioSystems},
  volume={99},
  number={2},
  pages={109--125},
  year={2010},
  publisher={Elsevier}
}

@article{jia2025asymmetric,
  title={Asymmetric interaction preference induces cooperation in human-agent hybrid game},
  author={Jia, Danyang and Dai, Xiangfeng and Xing, Junliang and Tao, Pin and Shi, Yuanchun and Wang, Zhen},
  journal={Science China Information Sciences},
  volume={68},
  number={11},
  pages={212201},
  year={2025},
  publisher={Springer}
}

@article{GDBJ1,
  author  = {Garrett Hardin},
  title   = {The Tragedy of the Commons},
  journal = {Science},
  volume  = {162},
  number  = {3859},
  pages   = {1243--1248},
  year    = {1968},
  doi     = {10.1126/science.162.3859.1243}
}

@article{jia2025social,
  title={Social networking agency and prosociality are inextricably linked in economic games},
  author={Jia, Danyang and Romi{\'c}, Ivan and Shi, Lei and Su, Qi and Liu, Chen and Liu, Jinzhuo and Holme, Petter and Li, Xuelong and Wang, Zhen},
  journal={Nature Human Behaviour},
  volume={9},
  number={12},
  pages={2620--2631},
  year={2025},
  publisher={Nature Publishing Group UK London}
}

@article{han2024co,
  title={Co-evolutionary dynamics in optimal multi-agent game with environment feedback},
  author={Han, Weiwei and Zhang, Zhipeng and Zhu, Yuying and Xia, Chengyi},
  journal={Neurocomputing},
  volume={581},
  pages={127510},
  year={2024},
  publisher={Elsevier}
}

@article{gong2022limit,
  title={Limit cycles analysis and control of evolutionary game dynamics with environmental feedback},
  author={Gong, Lulu and Yao, Weijia and Gao, Jian and Cao, Ming},
  journal={Automatica},
  volume={145},
  pages={110536},
  year={2022},
  publisher={Elsevier}
}

@article{liu2023coevolutionary,
  title={Coevolutionary dynamics via adaptive feedback in collective-risk social dilemma game},
  author={Liu, Linjie and Chen, Xiaojie and Szolnoki, Attila},
  journal={Elife},
  volume={12},
  pages={e82954},
  year={2023},
  publisher={eLife Sciences Publications, Ltd}
}

@article{GDBJBY,
  author  = {Beardsley, Timothy M.},
  title   = {A Lifeboat for a Lake},
  journal = {BioScience},
  volume  = {53},
  number  = {8},
  pages   = {691--691},
  year    = {2003},
  month   = {08},
  doi     = {10.1641/0006-3568(2003)053[0691:ALFAL]2.0.CO;2}
}

@article{han2022institutional,
  title={Institutional incentives for the evolution of committed cooperation: ensuring participation is as important as enhancing compliance},
  author={Han, The Anh},
  journal={Journal of The Royal Society Interface},
  volume={19},
  number={188},
  pages={20220036},
  year={2022},
  publisher={The Royal Society}
}

@article{zhu2022nash,
  title={Nash equilibrium in iterated multiplayer games under asynchronous best-response dynamics},
  author={Zhu, Yuying and Xia, Chengyi and Chen, Zengqiang},
  journal={IEEE Transactions on Automatic Control},
  volume={68},
  number={9},
  pages={5798--5805},
  year={2022},
  publisher={IEEE}
}

@article{song2025emergence,
  title={Emergence of cooperation and commitment in optional prisoner’s dilemma},
  author={Song, Zhao and Han, The Anh},
  journal={Applied Mathematical Modelling},
  pages={116603},
  year={2025},
  publisher={Elsevier}
}

@inproceedings{wang2022modelling,
  title={Modelling the dynamics of regret minimization in large agent populations: a master equation approach.},
  author={Wang, Zhen and Mu, Chunjiang and Hu, Shuyue and Chu, Chen and Li, Xuelong},
  booktitle={Ijcai},
  volume={22},
  pages={534--540},
  year={2022}
}

@article{ramazi2020global,
  title={Global convergence for replicator dynamics of repeated snowdrift games},
  author={Ramazi, Pouria and Cao, Ming},
  journal={IEEE Transactions on Automatic Control},
  volume={66},
  number={1},
  pages={291--298},
  year={2020},
  publisher={IEEE}
}

@article{axelrod1980effective,
  title={Effective choice in the prisoner's dilemma},
  author={Axelrod, Robert},
  journal={Journal of Conflict Resolution},
  volume={24},
  number={1},
  pages={3--25},
  year={1980},
  publisher={Sage Publications Sage CA: Los Angeles, CA}
}

@article{hardin1998extensions,
  title={Extensions of ``the tragedy of the commons"},
  author={Hardin, Garrett},
  journal={Science},
  volume={280},
  number={5364},
  pages={682--683},
  year={1998},
  publisher={American Association for the Advancement of Science}
}

@article{sun2023state,
  title={State-dependent optimal incentive allocation protocols for cooperation in public goods games on regular networks},
  author={Sun, Zaiben and Chen, Xiaojie and Szolnoki, Attila},
  journal={IEEE Transactions on Network Science and Engineering},
  volume={10},
  number={6},
  pages={3975--3988},
  year={2023},
  publisher={IEEE}
}

@article{zhu2026evolutionary,
  title={Evolutionary Game Dynamics in Tripartite Crowdsourcing with Incentive Regulation},
  author={Zhu, Zishuai and Hua, Shijia and Liu, Linjie and Chen, Xiaojie},
  journal={IEEE Transactions on Knowledge and Data Engineering},
  year={2026},
  publisher={IEEE}
}

@article{hua2024coevolutionary,
  title={Coevolutionary dynamics of collective cooperation and dilemma strength in a collective-risk game},
  author={Hua, Shijia and Xu, Mingquan and Liu, Linjie and Chen, Xiaojie},
  journal={Physical Review Research},
  volume={6},
  number={2},
  pages={023313},
  year={2024},
  publisher={APS}
}

@article{lu2024hybrid,
  title={Hybrid reward-punishment in feedback-evolving game for common resource governance},
  author={Lu, Zhengyuan and Hua, Shijia and Wang, Lichen and Liu, Linjie},
  journal={Physical Review E},
  volume={110},
  number={3},
  pages={034301},
  year={2024},
  publisher={APS}
}

@article{Weiz,
  author  = {Joshua S. Weitz and Ceyhun Eksin and Keith Paarporn and Sam P. Brown and William C. Ratcliff},
  title   = {An oscillating tragedy of the commons in replicator dynamics with game-environment feedback},
  journal = {Proc. Natl. Acad. Sci. USA},
  volume  = {113},
  number  = {47},
  pages   = {E7518--E7525},
  year    = {2016},
  doi     = {10.1073/pnas.1604096113}
}

@article{zhu2025finite,
  title={Finite Strategy Switches of Coordinating and Anti-Coordinating Games on Weighted Networks},
  author={Zhu, Yuying and Zhang, Zhipeng and Xia, Chengyi and Li, Xiang and Chen, Zengqiang},
  journal={IEEE Transactions on Cybernetics},
  year={2025},
  publisher={IEEE}
}

@article{feng2023evolutionary,
  title={An evolutionary game with the game transitions based on the Markov process},
  author={Feng, Minyu and Pi, Bin and Deng, Liang-Jian and Kurths, J{\"u}rgen},
  journal={IEEE Transactions on Systems, Man, and Cybernetics: Systems},
  volume={54},
  number={1},
  pages={609--621},
  year={2023},
  publisher={IEEE}
}

@article{pi2025dynamic,
  title={Dynamic evolution of complex networks: A reinforcement learning approach applying evolutionary games to community structure},
  author={Pi, Bin and Deng, Liang-Jian and Feng, Minyu and Perc, Matja{\v{z}} and Kurths, J{\"u}rgen},
  journal={IEEE Transactions on Pattern Analysis and Machine Intelligence},
  year={2025},
  publisher={IEEE}
}

@article{zhu2023equilibrium,
  title={Equilibrium analysis and incentive-based control of the anticoordinating networked game dynamics},
  author={Zhu, Yuying and Zhang, Zhipeng and Xia, Chengyi and Chen, Zengqiang},
  journal={Automatica},
  volume={147},
  pages={110707},
  year={2023},
  publisher={Elsevier}
}

@article{Tilman2018EvolutionaryGW,
  author  = {Andrew R. Tilman and Joshua B. Plotkin and Erol Akçay},
  title   = {Evolutionary games with environmental feedbacks},
  journal = {Nat. Commun.},
  volume  = {11},
  pages   = {915},
  year    = {2020},
  doi     = {10.1038/s41467-020-14731-4}
}

@article{10.1093/pnasnexus/pgae455,
  author  = {Hiromu Ito and Masato Yamamichi},
  title   = {A complete classification of evolutionary games with environmental feedback},
  journal = {PNAS Nexus},
  volume  = {3},
  number  = {11},
  pages   = {pgae455},
  year    = {2024},
  doi     = {10.1093/pnasnexus/pgae455}
}

@article{hua2023facilitating,
  title={Facilitating the evolution of cooperation through altruistic punishment with adaptive feedback},
  author={Hua, Shijia and Liu, Linjie},
  journal={Chaos, Solitons \& Fractals},
  volume={173},
  pages={113669},
  year={2023},
  publisher={Elsevier}
}

@article{GEZHONGCUJINFANGSHI,
  author  = {Martin A. Nowak},
  title   = {Five Rules for the Evolution of Cooperation},
  journal = {Science},
  volume  = {314},
  number  = {5805},
  pages   = {1560--1563},
  year    = {2006},
  doi     = {10.1126/science.1133755}
}

@article{ito2018scaling,
  title={Scaling the phase-planes of social dilemma strengths shows game-class changes in the five rules governing the evolution of cooperation},
  author={Ito, Hiromu and Tanimoto, Jun},
  journal={Royal Society open science},
  volume={5},
  number={10},
  year={2018},
  publisher={The Royal Society}
}

@book{tanimoto2015fundamentals,
  title={Fundamentals of evolutionary game theory and its applications},
  author={Tanimoto, Jun},
  year={2015},
  publisher={Springer}
}

@article{wang2015universal,
  title={Universal scaling for the dilemma strength in evolutionary games},
  author={Wang, Zhen and Kokubo, Satoshi and Jusup, Marko and Tanimoto, Jun},
  journal={Physics of Life Reviews},
  volume={14},
  pages={1--30},
  year={2015},
  publisher={Elsevier}
}

@article{JILI1,
  author  = {Wenqiang Zhu and Qiuhui Pan and Sha Song and Mingfeng He},
  title   = {Effects of exposure-based reward and punishment on the evolution of cooperation in prisoner's dilemma game},
  journal = {Chaos Solitons Fractals},
  volume  = {172},
  pages   = {113519},
  year    = {2023},
  doi     = {10.1016/j.chaos.2023.113519}
}

@article{JILI3,
  author  = {Xiaojie Chen and Attila Szolnoki},
  title   = {Punishment and inspection for governing the commons in a feedback-evolving game},
  journal = {PLOS Comput. Biol.},
  volume  = {14},
  number  = {7},
  pages   = {e1006347},
  year    = {2018},
  doi     = {10.1371/journal.pcbi.1006347}
}

@article{CHENGFAGUODU,
  author  = {Subhasish M. Chowdhury and Matteo M. Marini},
  title   = {Overbidding and heterogeneous behavior in contest experiments: A meta-comment on cross-cultural differences},
  journal = {J. Behav. Exp. Econ.},
  volume  = {110},
  pages   = {102210},
  year    = {2024},
  doi     = {10.1016/j.socec.2024.102210}
}

@article{FUZHIFANGCHENG,
  author  = {Peter D. Taylor and Leo B. Jonker},
  title   = {Evolutionary stable strategies and game dynamics},
  journal = {Math. Biosci.},
  volume  = {40},
  number  = {1-2},
  pages   = {145--156},
  year    = {1978},
  doi     = {10.1016/0025-5564(78)90077-9}
}

@article{JILIYINRU,
  author  = {Karl Sigmund and Christoph Hauert and Martin A. Nowak},
  title   = {Reward and punishment},
  journal = {Proc. Natl. Acad. Sci. USA},
  volume  = {98},
  number  = {19},
  pages   = {10757--10762},
  year    = {2001},
  doi     = {10.1073/pnas.161155698}
}

@article{CHENGFA2,
  author  = {Christoph Hauert and Arne Traulsen and Hannelore Brandt and Martin A. Nowak and Karl Sigmund},
  title   = {Via Freedom to Coercion: The Emergence of Costly Punishment},
  journal = {Science},
  volume  = {316},
  number  = {5833},
  pages   = {1905--1907},
  year    = {2007},
  doi     = {10.1126/science.1141588}
}

@article{JILI4,
  author  = {Arindam Mandal and Sukanta Sarkar and Sagar Chakraborty and Partha Sharathi Dutta},
  title   = {Mitigating ecological tipping points via game–environment feedback},
  journal = {Proc. R. Soc. A},
  volume  = {481},
  number  = {2312},
  pages   = {20240915},
  year    = {2025},
  doi     = {10.1098/rspa.2024.0915}
}

@article{NYCZKA2012317,
  author  = {Piotr Nyczka and Jerzy Cisło and Katarzyna Sznajd-Weron},
  title   = {Opinion dynamics as a movement in a bistable potential},
  journal = {Physica A},
  volume  = {391},
  number  = {1},
  pages   = {317--327},
  year    = {2012},
  doi     = {10.1016/j.physa.2011.07.050}
}

@article{chen2014probabilistic,
  title={Probabilistic sharing solves the problem of costly punishment},
  author={Chen, Xiaojie and Szolnoki, Attila and Perc, Matja{\v{z}}},
  journal={New Journal of Physics},
  volume={16},
  number={8},
  pages={083016},
  year={2014},
  publisher={IOP Publishing}
}

@article{wang2025optimally,
  title={Optimally combined incentive for cooperation among interacting agents in population games},
  author={Wang, Shengxian and Cao, Ming and Chen, Xiaojie},
  journal={IEEE Transactions on Automatic Control},
  volume={70},
  number={7},
  pages={4562--4577},
  year={2025},
  publisher={IEEE}
}

@article{zheng2021modeling,
  title={Modeling and dynamics of networked evolutionary game with switched time delay},
  author={Zheng, Yating and Li, Changxi and Feng, Jun-e},
  journal={IEEE Transactions on Control of Network Systems},
  volume={8},
  number={4},
  pages={1778--1787},
  year={2021},
  publisher={IEEE}
}

@article{zhao2016matrix,
  title={A matrix approach to the modeling and analysis of networked evolutionary games with time delays},
  author={Zhao, Guodong and Wang, Yuzhen and Li, Haitao},
  journal={IEEE/CAA Journal of Automatica Sinica},
  volume={5},
  number={4},
  pages={818--826},
  year={2016},
  publisher={IEEE}
}

@article{hu2023evolutionary,
  title={Evolutionary games with two species and delayed reciprocity},
  author={Hu, Kaipeng and Li, Zhouhong and Shi, Lei and Perc, Matja{\v{z}}},
  journal={Nonlinear Dynamics},
  volume={111},
  number={8},
  pages={7899--7910},
  year={2023},
  publisher={Springer}
}

@article{shijiagoverning,
  author={Hua, Shijia and Liu, Linjie},
  title={Governing the commons via adaptive incentive control with real-time feedback},
  journal={SCIENCE CHINA Information Sciences},
    year={2026},
  publisher={Science China Press}
}



\end{document}